\newcommand{\sDelta}{{\vphantom{\Delta}\mathpalette\sD@lta\relax}}
\newcommand{\sD@lta}[2]{%
  \ooalign{\hidewidth$\m@th#1\mkern-1mu {\scriptstyle s}$\hidewidth\cr$\m@th#1\Delta$\cr}%
}
\newcommand{\tDelta}{{\vphantom{\Delta}\mathpalette\tD@lta\relax}}
\newcommand{\tD@lta}[2]{%
  \ooalign{\hidewidth$\m@th#1\mkern-1mu {\scriptstyle t}$\hidewidth\cr$\m@th#1\Delta$\cr}%
}
\newcounter{example}
\newenvironment{example}[1][]{\refstepcounter{example}\par\medskip
   \noindent \textbf{Example~\theexample. #1} \rmfamily}{\medskip}
\DeclarePairedDelimiter\ceil{\lceil}{\rceil}
\DeclarePairedDelimiter\floor{\lfloor}{\rfloor}
\newtheorem{definition}{Definition}
\newcommand{\bs}{\bm{s}}
\newcommand{\bc}{\bm{c}}
\newcommand{\bx}{\bm{x}}
\newcommand{\bfzero}{\bm{0}}
\newcommand{\bI}{\bm{I}}
\newcommand{\bA}{\bm{A}}
\newcommand{\bZ}{\bm{Z}}
\newcommand{\bQ}{\bm{Q}}
\newcommand{\bM}{\bm{M}}
\newcommand{\bv}{\bm{v}}
\newcommand{\bz}{\bm{z}}
\newcommand{\bfepsilon}{\bm{\epsilon}}
\newcommand{\bfpsi}{\bm{\psi}}
\newcommand{\bfxi}{\bm{\xi}}
\newcommand{\bftheta}{\bm{\theta}}
\newcommand{\normal}{\mathcal{N}}
\newtheorem{fact}{Fact}
\newtheorem{proposition}{Proposition}
\newtheorem{assumption}{Assumption}
\title{Statistical Inference for Complete and Incomplete Mobility Trajectories under the Flight-Pause Model}
\author{Marcin Jurek\thanks{Corresponding author: \texttt{marcinjurek1988@gmail.com}} \qquad Catherine A. Calder \qquad Corwin Zigler\\\vspace{0.01cm}\\Department of Statistics and Data Sciences, University of Texas at Austin}
\begin{document}

\maketitle

\begin{abstract}
We formulate a statistical flight-pause model for human mobility, represented by a collection of random objects, called motions, appropriate for mobile phone tracking (MPT) data. We develop the statistical machinery for parameter inference and trajectory imputation under various forms of missing data. We show that common assumptions about the missing data mechanism for MPT are not valid for the mechanism governing the random motions underlying the flight-pause model, representing an understudied missing data phenomenon.  We demonstrate the consequences of missing data and our proposed adjustments in both simulations and real data, outlining implications for MPT data collection and design. 
\end{abstract}

{\small\noindent\textbf{Keywords:} digital phenotyping, missing data, semi-Markov process, trajectory data, space-time process}

\section{Introduction}\label{sec:introduction}

Over the past decade, smartphones equipped with location-sensing technologies – such as multilateration of radio signals between cell towers, global navigation satellite systems, or connection to Wi-Fi positioning systems -- have become ubiquitous throughout much of the world \citep{pewStudy}. These mobile-phone tracking (MPT) technologies, used individually or in concert, supply precise geographic information to smartphone applications for purposes such as real-time navigation, locating network partners (e.g., Find My Friends), or recording fitness achievements. They also provide a wealth of data relevant to the study of daily patterns of human mobility, both about individual behavior and from a systems perspective.   In the biomedical and social sciences, MPT data is becoming an increasingly common component of cohort studies, where it has been employed for purposes of digital phenotyping \citep{onnela2016harnessing} or estimating personal exposure to the ambient environment or particular social contexts \citep{browning2021human, cagney2020urban, nyhan2019quantifying, schultes2021covid, crawford2021impact}

Statistical analysis of general trajectory data, defined as the spatial location of an object over time, has a rich history \citep{dunn1977analysis, blackwell1997random, brillinger2010modeling}. For example, mechanistically-motivated statistical models for animal movement trajectories, including those with dynamics derived from differential equations, have received considerable attention in recent years \citep[e.g.][]{brillinger2004stochastic, hooten2017animal, russell2018spatially}. In contrast, statistical treatment of MPT trajectories has emphasized a particularly salient feature of daily human mobility – distinct periods of stationarity and of movement – that is often highly relevant to the objectives of biomedical and social scientific investigations. For example, researchers interested in characterizing the response to a novel physical therapy protocol may be interested in the duration or frequency of stationary periods or the distance traveled during periods of movement. Similarly, social scientists may seek to understand the consequences of time spent in places lacking informal supervision on the behavioral outcomes of youth\footnote{We note that the connections to the descriptive summaries of daily human mobility “activity space” and a “space-time prism,” both initially developed in the Geography literature \citep{golledge1997spatial, torsten1970people}. In statistics, the former has been extended by \citep{chen2019generalized} using tools from topological data analysis. }.

Despite the growing interest in MPT data and quantification of some of its features \citep[e.g.][]{chen2020measuring}, rigorous statistical tools to study them remain somewhat limited.  Following \citet{rhee2011levy}’s observation that human walks share characteristic features of truncated Levy random walks, \citet{barnett2020inferring} and \citet{liu2021bidirectional} propose modeling MPT data as a series of flights and pauses.  Motivated by the task of reconstructing portions of MPT data trajectories that are unobserved (i.e., contain “gaps”), they propose a nonparametric approach using observed sequences of flights and pauses at different points in an individual’s trajectory to fill in gaps.  Absent from this work is any formal statement of a generative model for a partially-observed trajectory (e.g., with a likelihood), making it difficult to rigorously investigate the implications of the gaps (e.g., ignorable or non-ignorable missingness) on inferred trajectories. 

Our work shares shares some similarities with the existing contributions to the animal movement literature. Developed independently within that field and anchored in the theory of stochastic processes, the moving-resting (MR) process \citet{yan2014moving, hu2021moving} is similar to the approach described above. It assumes continuous time and specific parametric distributions to capture the characteristics of flights and pauses. The continuous time formulation and attendant complications for inference are motivated by a data collection mechanism that records the position of an animal at irregular time intervals.  In contrast, the present work is grown from literature on human mobility as mentioned above, which is typically based on regularly-spaced (in time) observations, with many observations missing by design. Another paper in the animal movement literature related to ours, \citet{langrock2012flexible}, presents a hidden Markov model for such regularly-spaced observations, but the model formulation is not explicitly based on flights and pauses nor does the paper investigate the implications of missing data patterns. These implications are often fundamental in human mobility studies as discussed in \citet{barnett2020inferring}.

In this paper, we build on the previous work of human mobility researchers and introduce a statistical framework for modeling MPT trajectories based on what we refer to as the \textit{flight-pause model} (FPM), which is discrete in time and continuous in space. At first glance, this model may seem to be an overly simplistic description of daily human mobility.  We argue, however, that it (1) is a sufficiently rich baseline model upon which to build and (2) allows us to demonstrate how likelihood-based inference on model parameters and imputation of gaps can be done under different assumptions about the data collection mechanism or, equivalently, the missing data mechanism.  We view the latter as a particularly important contribution of our work because of the insights that can be gleaned about the implications of MPT data collection strategies in cohort studies.  For example, we can formally characterize the implications of on-off designs – purposeful breaks in data collection to conserve battery power – on bias in parameter estimation and reduction in efficiency.

This paper is organized as follows. In Section \ref{sec:model}, we formally introduce the FPM and provide expressions for the corresponding likelihood function, generally and under a particular parametrization of components of the model. We extend our discussion of likelihood-based inference for the flight-pause model in Section \ref{sec:data-model} by considering the incomplete data setting, describe different data collection strategies yielding incomplete MPT trajectories and their implications in Section \ref{sec:data-collection}, and demonstrate how trajectory interpolation can be performed in Section \ref{sec:imputation}. Sections \ref{sec:simulations} and \ref{subsec:data-application} present numerical simulations and an analysis of real MPT data.  Finally, in Section \ref{sec:discussion}, we conclude with a general discussion of our results. The code used in simulations and data illustration can be found at \href{https://github.com/marcinjurek/pyhMob}{\texttt{https://github.com/marcinjurek/pyhMob}}.

\section{Statistical formulation of the flight pause model}\label{sec:model}

In this section, we introduce the FPM, a probabilistic model for human movement specified as a probability distribution on a random object called a \textit{motion}.  If we assume this distribution is indexed by a collection of unknown parameters, it can be viewed as a statistical model that can be used to make inference on the unknown parameters governing motion, given observed instances of human mobility.  The dynamics of the FPM are guided by ideas introduced in seminal works of \citet{barnett2020inferring} and \citet{rhee2011levy}. In particular, we are motivated by the setting where MPT data come in the form of a sequence of measurements observed at discrete time increments (e.g., every 15 seconds) over a period of time on the order of days, where a measurement may be taken while a person is moving or is stationary.

To introduce the FPM, we define a \textit{motion} $\mathcal{M} = \left(\mathbf{M}_1, \dots, \mathbf{M}_K\right)$ to be a sequence of $K$ random objects called \textit{increments}.\footnote{In keeping with the convention used in mathematics throughout the paper we use brackets when the elements are ordered and curly braces when they are not.}  Each increment, $\mathbf{M}_k$, corresponds to a stage of the motion and can have one of two \textit{types}: \textit{flight} or  \textit{pause}.  Flights are increments that last for one unit of time and represent movement in physical space (here, assumed to be $\mathbb{R}^2$, without loss of generality).  Pauses correspond to periods of stationarity in physical space which last one or more time units. 

Each increment can be decomposed as $\mathbf{M}_k = \left(\mathbf{L}_k, \mathbf{\Delta}_k, R_k\right)$.  The first component of the increment, $\mathbf{L}_k$, consists of information describing its beginning:  $\mathbf{L}_k = (L^T_k, L^S_k)$, where $L^T_k \in \mathbb{N}$ and $L^S_k \in \mathbb{R}^2$ denote the time and location, respectively, at which the $k$-th increment starts. Notice that we assume time is discrete and can be mapped onto the integers without loss of generality.  The second and third component of an increment, $\mathbf{\Delta}_k$ and $R_k$, capture what happens during an increment.  We let  $\mathbf{\Delta}_k = (\Delta^T_k, \Delta^S_k)$, where $\Delta^T_k \in \mathbb{N}$, stands for how long the increment lasts, and $\Delta^S_k \in \mathbb{R}^2$ is a vector in $\mathbb{R}^2$ representing displacement in physical space.  Finally, $R_k \in \{f, p\}$ is an indicator of whether the increment is a flight ($R_k = f$) or pause ($R_k=p$).  

We now define the FPM through a collection of restrictions imposed on the density of a motion, $q(\mathcal{M})$.  Specifically, under the FPM, we will subsequently show that:
\begin{equation}\begin{split}\label{eq:FPM-basic}
q(\mathcal{M}) = q(\bM_1) &\cdot \prod_{k=2}^K q(R_k|\bM_{k-1}) \cdot \prod_{R_k = f, k>1} q(\Delta^S_k|R_k = f, \bM_{k-1}) \\ 
&\cdot \prod_{R_k = p, k>1} q(\Delta^T_k|R_k = p, \bM_{k-1}).
\end{split}
\end{equation}

The exact derivation of \eqref{eq:FPM-basic}, contained within a proof of Proposition \ref{thm:complete-data-likelihood}, are shown in the Appendix.

We start with a set of properties which ensures continuity of the motion. Continuity should not be understood here in the strict sense in which it is defined in calculus. Instead we use this term to describe formally what it takes for an increment to start at the location in space and time right after the end of the previous one. Consequently, we assume that for $k=2, \dots, K$ the following equalities hold almost surely.
\begin{align}
  L^T_k &= L_{k-1}^T + \Delta^T_{k-1}  &  &\quad\quad\text{(continuity in time)} \label{ass:continuity-in-time} \\
  L^S_k &= L^S_{k-1} + \Delta^S_{k-1}, & \text{if } R_{k-1} = f  &\quad\quad\text{(continuity in space - flights)} \label{ass:continuity-in-space-flights} \\
  L^S_k &= L^S_{k-1}, &\text{if } R_{k-1} = p  &\quad\quad\text{(continuity in space - pauses)} \label{ass:continuity-in-space-pauses}
\end{align}

The second set of properties is meant to provide greater clarity. The first requires that flights last one unit of time which explicitly conditions all the results produced by the model to the selected temporal resolution. Note that fixing the flight duration does not in any way limit modeling velocity since flight lengths (or, more specifically, their distribution) can still be freely chosen. The second avoids confusion with regard to the lengths of pauses by eliminating the possibility of two or more pauses are consecutive, as then they could be combined into a single longer pause. This can be expresses by requiring that the following equalities hold for $k=2, \dots, K$ and almost surely:
\begin{align}
\Delta^T_k &= 1, \quad \text{if } R_k = f &\quad\quad \text{(flights last 1 unit of time)} \label{ass:flights-have-length-1} \\
R_k &= f, \quad \text{if } R_{k-1} = p &\quad\quad  \text{(no consecutive pauses)}\label{ass:no-cons-pauses}
\end{align}

Finally the last two modeling assumptions can be viewed as modeling choices, partially inspired by \citep{barnett2020inferring}.

\begin{align}
   &q(\bM_k|\bM_{k-1}, \dots, \bM_1) = q(\bM_k|\bM_{k-1}) & \text{(Markovianity)} \label{ass:Markov} \\
   &q(\Delta^S_k = \Delta^S_{k-1}| R_{k} = p) = 1 & \text{(pauses store flight information)} \label{ass:DeltaS}
\end{align}

While assumption \eqref{ass:Markov} is relatively straightforward, assumption \eqref{ass:DeltaS} might seem counter-intuitive. It is introduced in order to preserve the relationship between two flights that are separated by a pause; when a person stops, the direction of their subsequent flight may depend on that of their previous flight, which would be impossible if we only imposed an order 1 Markov structure or if we defined $\Delta_k^{S}=0$ whenever $R_k=p$. Thus, when $R_k = p$, we give up the interpretation of $\Delta^S_k$ as a change in space and instead we use it to store the information about the most recent flight.

\begin{proposition}\label{thm:complete-data-likelihood}
The likelihood function corresponding to the FPM indexed by a collection of unknown parameters $\bftheta$ can be written as
\begin{align}
\begin{split}\label{eq:complete-data-likelihood}
    \mathcal{L}(\bftheta \vert \mathcal{M}) &= q(\mathcal{M}| \bftheta) \\
    &= q(\bM_1|\bftheta) \cdot \prod_{R_k = f, k>1}  q(\Delta^S_k|R_k, \bM_{k-1}, \bftheta) \cdot \prod_{R_k = p, k>1} q(\Delta^T_k|R_k = p, \bM_{k-1}, \bftheta)\cdot\\ 
    &\cdot\prod_{k>1} q(R_k| \bM_{k-1}, \bftheta)
\end{split}
\end{align}
\end{proposition}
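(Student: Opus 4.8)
The plan is to begin from the elementary chain-rule factorization of the joint density and then collapse it using the modeling assumptions \eqref{ass:continuity-in-time}--\eqref{ass:DeltaS}. Writing
\[
q(\mathcal{M}\mid\bftheta) = q(\bM_1\mid\bftheta)\prod_{k=2}^{K} q(\bM_k\mid \bM_{k-1},\dots,\bM_1,\bftheta),
\]
the Markov assumption \eqref{ass:Markov} immediately replaces each conditioning set $\{\bM_{k-1},\dots,\bM_1\}$ by $\bM_{k-1}$ alone. Everything then reduces to decomposing a single transition kernel $q(\bM_k\mid\bM_{k-1},\bftheta)$ into the three advertised pieces and recombining across $k$.

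First I would expand each increment into its coordinates, $\bM_k = (L^T_k, L^S_k, \Delta^T_k, \Delta^S_k, R_k)$, and factor the transition kernel in the order type, start, then change:
\[
q(\bM_k\mid\bM_{k-1},\bftheta) = q(R_k\mid\bM_{k-1},\bftheta)\, q(\mathbf{L}_k\mid R_k,\bM_{k-1},\bftheta)\, q(\mathbf{\Delta}_k\mid \mathbf{L}_k, R_k,\bM_{k-1},\bftheta).
\]
The crucial observation is that the continuity assumptions \eqref{ass:continuity-in-time}--\eqref{ass:continuity-in-space-pauses} render $\mathbf{L}_k$ a deterministic function of $\bM_{k-1}$: the start time equals $L^T_{k-1}+\Delta^T_{k-1}$, and the start location is pinned down by $L^S_{k-1}$, $\Delta^S_{k-1}$, and $R_{k-1}$, all of which are coordinates of $\bM_{k-1}$. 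Hence the middle factor is a degenerate conditional equal to $1$, and $\mathbf{L}_k$ may be dropped from all subsequent conditioning.

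Next I would split the displacement factor as $q(\mathbf{\Delta}_k\mid R_k,\bM_{k-1},\bftheta) = q(\Delta^T_k\mid R_k,\bM_{k-1},\bftheta)\, q(\Delta^S_k\mid \Delta^T_k, R_k,\bM_{k-1},\bftheta)$ and treat the two types separately. If $R_k=f$, assumption \eqref{ass:flights-have-length-1} forces $\Delta^T_k=1$, so its factor is degenerate and the surviving term is $q(\Delta^S_k\mid R_k=f,\bM_{k-1},\bftheta)$. If $R_k=p$, assumption \eqref{ass:DeltaS} forces $\Delta^S_k=\Delta^S_{k-1}$, so the spatial factor is degenerate and the surviving term is $q(\Delta^T_k\mid R_k=p,\bM_{k-1},\bftheta)$. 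In either case the factor $q(R_k\mid\bM_{k-1},\bftheta)$ persists; assumption \eqref{ass:no-cons-pauses} needs no separate handling, since it merely makes this same factor degenerate whenever $R_{k-1}=p$. Collecting the surviving terms over $k=2,\dots,K$ and grouping the flight and pause contributions into the two indexed products reproduces \eqref{eq:complete-data-likelihood} exactly, and, stripped of $\bftheta$, the earlier display \eqref{eq:FPM-basic}.

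The step demanding genuine care is not the combinatorics of the factorization but the measure-theoretic bookkeeping that justifies setting each degenerate conditional equal to $1$. Because $\bM_k$ mixes discrete coordinates ($L^T_k,\Delta^T_k,R_k$) with continuous ones ($L^S_k,\Delta^S_k$), the density $q$ is defined against a product base measure, and a Lebesgue density of a point-mass coordinate would be ill-defined. The clean remedy is to declare the dominating measure of each forced coordinate to be the Dirac measure at its constrained value, with respect to which the corresponding conditional density is identically $1$; equivalently, one works on the lower-dimensional support carved out by \eqref{ass:continuity-in-time}--\eqref{ass:continuity-in-space-pauses}, \eqref{ass:flights-have-length-1}, and \eqref{ass:DeltaS}. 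Making this choice explicit is precisely what upgrades the informal phrase ``deterministic coordinates contribute a factor of $1$'' into a rigorous claim, and I expect it to be the only part of the argument requiring real attention.
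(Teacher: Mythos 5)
Your proposal is correct and follows essentially the same route as the paper's proof: chain rule plus Markovianity \eqref{ass:Markov}, then a per-step decomposition of $q(\bM_k\mid\bM_{k-1},\bftheta)$ in which the continuity assumptions \eqref{ass:continuity-in-time}--\eqref{ass:continuity-in-space-pauses}, the unit flight duration \eqref{ass:flights-have-length-1}, and the pause-displacement convention \eqref{ass:DeltaS} render all remaining coordinates degenerate, leaving only the type, flight-displacement, and pause-duration factors. The only cosmetic difference is that the paper writes these degenerate components as explicit indicator functions and then drops them, whereas you phrase them as conditionals equal to $1$ under Dirac dominating measures --- the same content, stated with slightly more measure-theoretic care.
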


\begin{figure}[ht]
    \centering
    \includegraphics[width=1.0\textwidth]{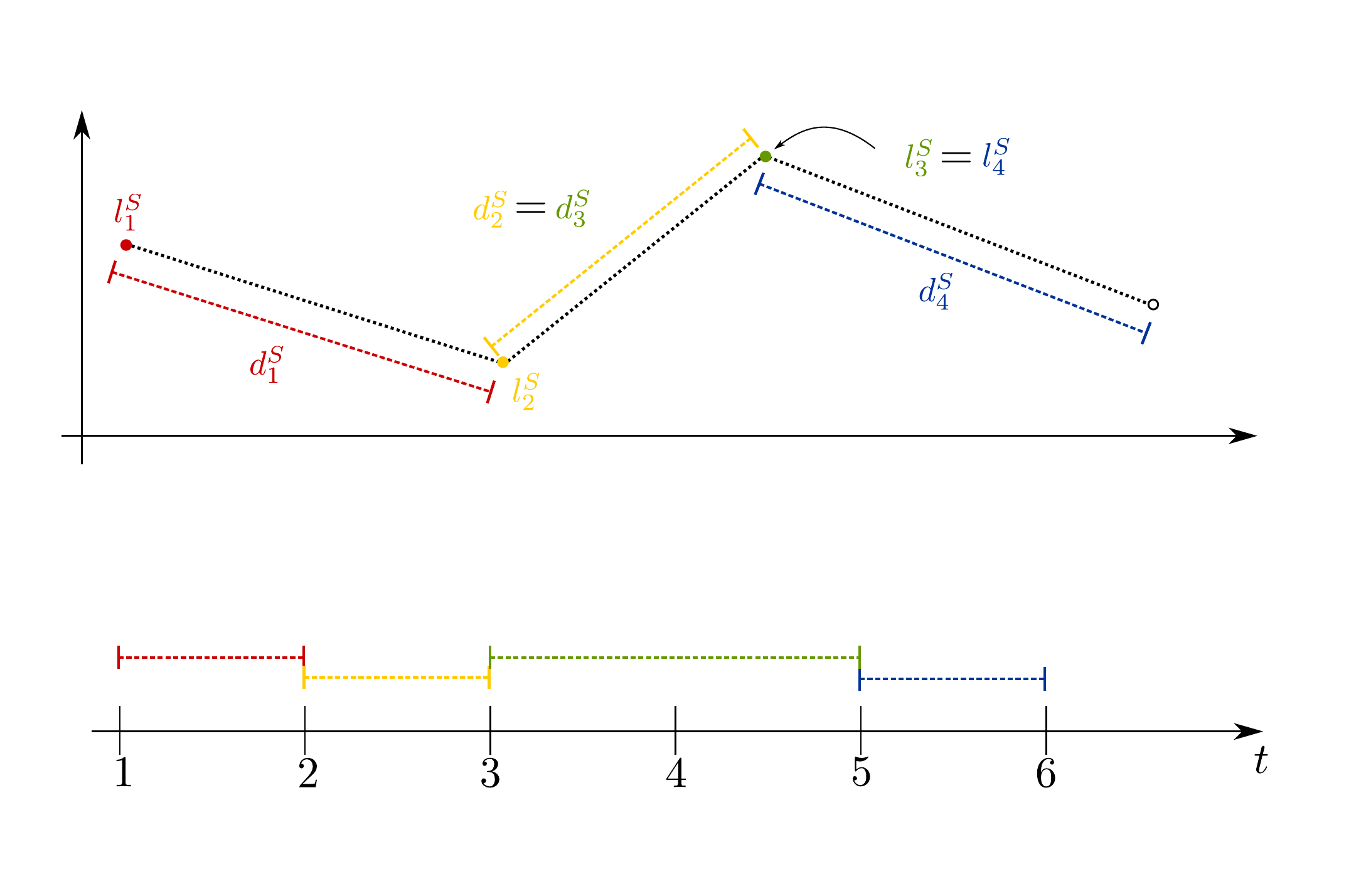}
    \caption{Illustration of the increments from Example \ref{ex:toy}.}
    \label{fig:increments-toy}
\end{figure}

\begin{example}\label{ex:toy}
Figure \ref{fig:increments-toy} shows a realization of a motion $\mathcal{M} = \left(\bM_1, \bM_2, \bM_3, \bM_4\right)$ consisting of three flights and one pause with components denoted as: 
\begin{align*}
\left( \right.&((l^S_1, 1), (d^S_1, 1), f), \\
&((l^S_2, 2), (d^S_2, 1), f), \\
&((l^S_3, 3), (d^S_3, 2), p), \\
&\left.((l^S_4, 5), (d^S_4, 1), f)\right).
\end{align*}
Notice that \eqref{ass:continuity-in-time} results in $l^T_i = l^T_{i-1} + d^T_i$ for $i=2, 3, 4$, while \eqref{ass:continuity-in-space-flights} leads to, for example, $l^S_2 = l^S_1 + d^S_1$. We also have $l^S_4 = l^S_3$ due to \eqref{ass:continuity-in-space-pauses}. Condition \eqref{ass:flights-have-length-1} is reflected in the fact that increments 1, 2 and 4 last one unit of time. Finally, due to \eqref{ass:DeltaS} we have that $d_2^S = d^S_3$.
\qed
\end{example}

Several plots with realizations of motion simulated using our model can be found in the supplementary material. 

Note that \eqref{eq:FPM-basic} leaves much room for further application-dependent specification. For example if at a given location the individual is suspected to make flights in certain directions then this restriction can be incorporated as a particular choice of $q(\Delta^S_k|R_k=f, \bM_{k-1})$, since $\bM_{k-1}$ contains information about the location. Similarly, we can specify $q(\Delta^S_k|R_k=f, \bM_{k-1})$ or $q(\Delta^T_k|R_k=f, \bM_{k-1})$ in a way that reflects the dependence of $\Delta^S_k$ or $\Delta^T_k$ on time (i.e. of the day or of the week, or circadian patterns to movement).

\subsection{Standard parametrization}\label{subsec:standard-param}

As noted above, further selection of the model components might be application-dependent we pursue development under a particular (and intentionally simple) specification of the yet undetermined distributions of \eqref{eq:complete-data-likelihood}. In particular we assume the following:
\begin{enumerate}
    \item \label{spec:four-dim} Parameter $\bftheta = (\theta_1, \theta_2, \theta_3, \theta_4)$ is four-dimensional.
    \item \label{spec:markov-in-types} Increment type depends only on the type of the previous increment (more generally, it could quite conceivably depend on where the individual is located, i.e. on $L$). This is equivalent to saying that $q(R_k | \bM_{k-1}, \bftheta) = q(R_k|R_{k-1}, \bftheta)$
    \item \label{spec:constant-pause-prob} constant probability of pausing after a flight, i.e. $q(R_k=p|R_{k-1} = f, \bftheta) = \theta_1$.
    \item \label{spec:geom} The distribution of pause lengths is geometric, or $q(\Delta^T_k|R_k = p, \bM_{k-1}, \bftheta) = (1-\theta_2)^{\Delta^T_k}\theta_2$.
    \item \label{spec:gaussian-increments} The distribution of $\Delta^S_k$, i.e. the flight's length and direction, depends only on the previous flight's length and direction (as opposed to, for example, the location $L_k^S$) and is normal, independent in each spatial coordinate. Formally,
    $$
        q(\Delta^S_k|R_k, \bM_{k-1}, \bftheta) = \prod_{i=1, 2} \normal((\Delta^S_{k})_i; \mu_i, \sigma),
    $$
    where $(x)_i$ denotes the $i$-th coordinate of vector $x$ and $\normal(\cdot; \mu, \sigma)$ stands for the pdf of a normal distribution with mean $\mu$ and standard deviation $\sigma$. We take $\mu_i = \theta_3(\Delta^S_{k-1})_i$ and $\sigma=\theta_4$.
\end{enumerate}
We say that the assumptions above constitute the \emph{standard parametrization} of the flight-pause model. Sample trajectories generated using this parametrization are shown in Figure \ref{fig:sample-trajectories}.

\begin{proposition}\label{thm:standard-param-likelihood}
Under the standard parametrization the complete-data likelihood for the flight pause-model takes the form
\begin{multline}
    q(\mathcal{\bM}|\bftheta) = q(\bM_1| \bftheta) \cdot \theta_1^{|\mathcal{P}|} (1-\theta_1)^{|\mathcal{F}^f|} \cdot \theta_2^{|\mathcal{P}|}  \left(1-\theta_2\right)^{\sum_{k : R_k=p}\Delta^T_k}\cdot \\ 
    \cdot \prod_{k > 1} \prod_{i=1, 2} \normal((\Delta^S_k)_i; \theta_3(\Delta^S_{k-1})_i, \theta_4),
\end{multline}
where $\mathcal{F}^f = \{k : R_k = f \land R_{k+1} = f\}$ is a set of indices of those increments which are flights and which are followed by other flights and $\mathcal{P} = \{k : R_k = p\}$ is the set of indices of all pauses and $\land$ stands for logical conjunction (``and''). We assume that the first increment is a flight, i.e. $R_1 = f$.
\end{proposition}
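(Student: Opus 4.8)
The plan is to start from the general complete-data likelihood established in Proposition~\ref{thm:complete-data-likelihood}, namely \eqref{eq:complete-data-likelihood}, and to substitute the five distributional choices of the standard parametrization one factor at a time. The right-hand side of \eqref{eq:complete-data-likelihood} is a product of the initial term $q(\bM_1\mid\bftheta)$ and three separate products: the type-transition product $\prod_{k>1} q(R_k\mid \bM_{k-1},\bftheta)$, the flight-displacement product $\prod_{R_k=f,\,k>1} q(\Delta^S_k\mid R_k,\bM_{k-1},\bftheta)$, and the pause-duration product $\prod_{R_k=p,\,k>1} q(\Delta^T_k\mid R_k=p,\bM_{k-1},\bftheta)$. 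Since the claimed expression is likewise a product of $q(\bM_1\mid\bftheta)$ with three grouped factors, I would evaluate each of these three products independently and then recombine.

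First I would treat the type-transition product. By specification~\ref{spec:markov-in-types} each factor reduces to $q(R_k\mid R_{k-1},\bftheta)$, and by specification~\ref{spec:constant-pause-prob} together with \eqref{ass:no-cons-pauses} the four possible transition probabilities are $q(p\mid f)=\theta_1$, $q(f\mid f)=1-\theta_1$, $q(f\mid p)=1$, and $q(p\mid p)=0$. The combinatorial heart of the argument is then counting how often each transition occurs. Because $R_1=f$ and no two pauses are consecutive, every pause increment is immediately preceded by a flight, so the number of $f\to p$ transitions among $k>1$ is exactly $|\mathcal{P}|$, while transitions out of a pause always land on a flight and contribute the trivial factor $1$. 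For the $f\to f$ transitions I would reindex by $j=k-1$: a transition with $R_{k-1}=R_k=f$ corresponds to an index $j$ with $R_j=f$ and $R_{j+1}=f$, which is precisely the defining condition of $\mathcal{F}^f$. Hence the type-transition product collapses to $\theta_1^{|\mathcal{P}|}(1-\theta_1)^{|\mathcal{F}^f|}$.

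Next I would substitute specification~\ref{spec:geom} into the pause-duration product: each pause contributes a factor $(1-\theta_2)^{\Delta^T_k}\theta_2$, and taking the product over $k\in\mathcal{P}$ (all of which satisfy $k>1$ since $R_1=f$) yields $\theta_2^{|\mathcal{P}|}(1-\theta_2)^{\sum_{k:R_k=p}\Delta^T_k}$. Finally, specification~\ref{spec:gaussian-increments} replaces each flight-displacement factor by $\prod_{i=1,2}\normal\bigl((\Delta^S_k)_i;\theta_3(\Delta^S_{k-1})_i,\theta_4\bigr)$; here I would invoke \eqref{ass:DeltaS}, which fixes $\Delta^S_k=\Delta^S_{k-1}$ deterministically on pauses, so that the genuine Gaussian densities are contributed by the flight increments while pauses carry no independent spatial density. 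Collecting these reproduces the displacement product in the claimed form, and multiplying the three evaluated products together with $q(\bM_1\mid\bftheta)$ gives the stated likelihood.

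The step I expect to be the main obstacle is the bookkeeping in the type-transition product: one must argue carefully, using $R_1=f$ and \eqref{ass:no-cons-pauses}, that the exponent of $(1-\theta_1)$ equals $|\mathcal{F}^f|$ (a ``followed-by-flight'' count) rather than the superficially different ``preceded-by-flight'' count, and that the $p\to f$ transitions---together with the edge case in which the terminal increment is itself a pause---contribute only trivial factors and therefore do not perturb the exponents. Verifying the alignment of index sets and confirming there is no double-counting is the only part requiring real care; the remaining substitutions are routine.
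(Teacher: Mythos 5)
Your proof is correct, and its overall skeleton (substitute the standard-parametrization specifications into Proposition~\ref{thm:complete-data-likelihood} factor by factor) matches the paper's; the key combinatorial step, however, is handled by a genuinely different device. You count transitions directly in the increment-indexed type chain, using $R_1=f$ and \eqref{ass:no-cons-pauses} to identify the number of $f\to p$ transitions with $|\mathcal{P}|$ and the $f\to f$ transitions with $|\mathcal{F}^f|$, and then substitute the geometric pmf of specification~\ref{spec:geom} separately to get $\theta_2^{|\mathcal{P}|}(1-\theta_2)^{\sum_{k:R_k=p}\Delta^T_k}$. The paper instead defines the time-indexed binary process $y_t=\mathbbm{1}(t\in\mathcal{D}(\mathcal{P}))$, observes that specifications \ref{spec:markov-in-types}--\ref{spec:geom} make it a two-state Markov chain with transition matrix $\mathcal{Q}$, and reads the entire product $\theta_1^{|\mathcal{P}|}(1-\theta_1)^{|\mathcal{F}^f|}\theta_2^{|\mathcal{P}|}(1-\theta_2)^{\sum_{k:R_k=p}\Delta^T_k}$ off that chain's path probability. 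Your route is more elementary and self-contained, and it recovers the stated exponent $\sum_{k:R_k=p}\Delta^T_k$ by direct substitution; the paper's reformulation is terser but buys machinery that is reused later, since powers of $\mathcal{Q}$ reappear in the proof of Proposition~\ref{thm:observed-data-likelihood}. You are also more explicit on two points the paper glosses over: the role of $R_1=f$ in ensuring every pause is preceded by a flight (so no pause escapes the $\theta_1$ count), and the fact that pauses contribute only a degenerate point-mass spatial factor via \eqref{ass:DeltaS}, so the Gaussian product genuinely ranges over flights even though the proposition writes it as $\prod_{k>1}$.
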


The proof of Proposition \ref{thm:standard-param-likelihood} can be found in the Appendix.

\begin{figure}[ht]
    \centering
    \includegraphics[width=1.0\textwidth]{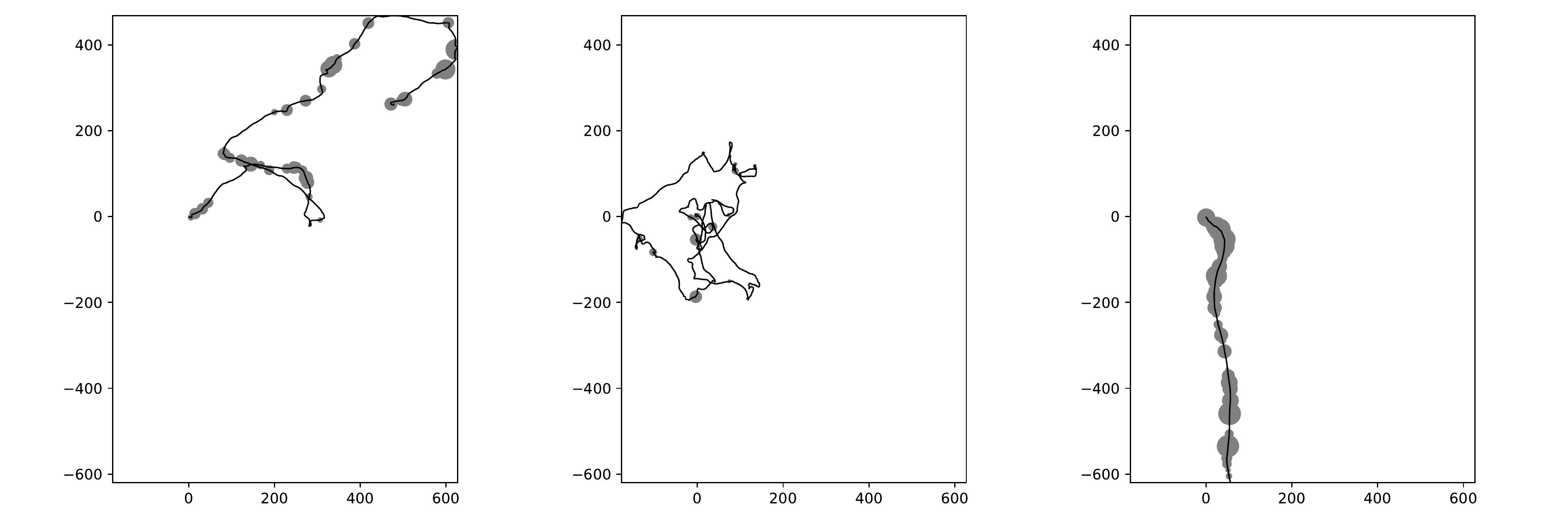}
    \caption{Sample trajectories corresponding to motions generated using the flight-pause model with standard parametrization. In the first panel we used $\bftheta^{(1)} = (0.1, 0.1, 0.95, 1)$, in the second $\bftheta^{(2)} = (0.01, 0.2, 0.9, 1)$ and in the third $\bftheta^{(3)} = (0.8, 0.5, 0.999, 1)$. The grey circles indicate pauses with the center of the circle corresponding to its location and radius proportional to its temporal duration. All trajectories consist of 1000 increments.}
    \label{fig:sample-trajectories}
\end{figure}

\section{Observed data model for incomplete trajectory data}\label{sec:data-model}

In Section \ref{sec:model} we specified the FPM for motions made up of increments. In practice, however, MPT data is not collected directly as increments. Instead, MPT devices are designed to measure the location of the device at certain (typically evenly spaced) points in time. Thus, raw MPT data must be transformed into increments before the FPM can be fitted.  

In situations where locations are not fully observed -- see Section \ref{sec:data-collection} for a list of scenarios that produce gaps in observed MPT trajectories -- it may not be possible to transform observed locations into increments. This creates a somewhat unusual situation in which the statistical model is specified in such a way that some portion of the observed data do not contain any relevant information about the unknown model parameters because other missing observations preclude calculation of the transformed data.\footnote{We note a somewhat analogous situation arises when in the analysis of time-series data using Auto-Regressive Integrated Moving Average Models \citep[ARIMA, e.g.][]{brockwell2009time} in the presence of missing data. In this case, lagged differences in the variable cannot always be calculated fully from the observed data.}  

We start this section by describing the tools that are needed to connect MPT data (i.e., time-stamped locations in geographic space) to increments defined in Section \ref{sec:model}. We build on the notation introduced in Section \ref{sec:model}, following standard missing-data notational conventions. We note that extending the FPM framework to accommodate missing data is not immediate, however, because of a complex relationship between the locations and the increments. In particular, MPT dictates a data collection mechanism for locations, but inference in the FPM requires the formulation of a corresponding data collection model for increments. We account for these complexities and establish the connection between observability of these two random objects.

\subsection{Expressing motion as a sequence of locations}

To make concrete the connection between motion and location sequences, consider a single increment $\bM = ((L^T, L^S), (\Delta^T, \Delta^S), R)$, dropping momentarily the $k$ subscript for clarity.  We define $\mathcal{D}(\bM)$ to be the set of time points from the beginning of $\bM$ up until its end. Formally, 
\begin{equation*}
    \mathcal{D}(\bM) \equiv \begin{cases}
      (L^T), & \text{ if } \;R = f, \\
      \left(L^T, L^T + 1, \dots, L^T + \Delta^T - 1\right), & \text{ if } \;R = p.
    \end{cases}
\end{equation*}
Notice that $\lvert \mathcal{D}(\bM) \rvert = \Delta^T$, i.e. that the number of elements in $\mathcal{D}(\bM)$ is equal to the duration of that increment.

We can define an analogous concept for the sequence of spatial coordinates underlying an increment, which we call the \emph{trajectory} of $\bM$ and write $\tau(\bM)$. Since flights last only one unit of time, their trajectory should correspond only to the original location $L$. For pauses, during which physical location does not change but which might last several units of time,  the trajectory contains multiple copies of the same location. We can thus formally define a trajectory as 
\begin{equation*}\label{eq:increment-trajectory}
    \tau(\bM) \equiv \begin{cases}
      \left(L^S\right), & \text{ if } \;R = f, \\
      \underbrace{\left(L^S, \dots, L^S\right)}_{\Delta^T \text{ times}}, & \text{ if } \;R = p.
    \end{cases}
\end{equation*}

Since a motion is composed of increments, we now naturally extend the concept of a trajectory from increments to a motion. Specifically, slightly abusing notation, we define $\tau(\mathcal{M}) \equiv \cup_{k = 1}^{K} \tau(\bM_k)$, to be the concatenation of the trajectories of all increments in $\mathcal{M}$. Thus, the trajectory of a motion is essentially a sequence of random spatial coordinates indexed by time. We can also write $\tau(\mathcal{M}) = \left(\mathcal{S}_t\right)_{t \in \mathbb{N}}$, where each $\mathcal{S}_t$ are collections of random spatial coordinates, each of which can be interpreted as the position of the MPT device at time $t$.

These new concepts can be illustrated using Example \ref{ex:toy} in Figure \ref{fig:increments-toy}. In this example, the realization of $\left(\mathcal{S}_t\right) = \tau(\mathcal{M})$ is $\left(l_1^S, l_2^S, l_3^S, l_4^S\right)$. Similarly, $\mathcal{D}(\bM_1)$ is realized as ${1}$, while $\mathcal{D}(\bM_3)$ as $(3, 4)$.

To be precise about the connection between the motion and a trajectory we formulate
\begin{proposition}\label{prop:unique-trajectory}
For each realization of the motion $\mathcal{M}$ there exists a unique realization of trajectory $\left(\mathcal{S}_t\right)_{t \in \mathbb{N}}$.
\end{proposition}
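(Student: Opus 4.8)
The plan is to exploit the fact that the trajectory is a deterministic function of the motion, so that both existence and uniqueness collapse to a single well-definedness check. Concretely, $\tau(\mathcal{M})$ is obtained by concatenating the per-increment trajectories $\tau(\bM_k)$, each of which is given explicitly by its case formula in terms of $L^S_k$, $\Delta^T_k$, and $R_k$. Existence is then immediate: applying the formula to a given realization of $\mathcal{M}$ always produces a sequence of spatial coordinates indexed by time. The substance of the statement is uniqueness, which amounts to verifying that the concatenation assigns to each relevant time index $t$ one and only one spatial coordinate $\mathcal{S}_t$.

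First I would make precise the (mildly abused) union notation $\cup_{k=1}^K \tau(\bM_k)$ as an ordered concatenation indexed by time, and pin down the index set: since the motion has $K$ increments, the trajectory is defined on $\{L^T_1, L^T_1 + 1, \dots, L^T_K + \Delta^T_K - 1\}$, which we regard as a subset of $\mathbb{N}$. The only two ways well-definedness could fail are (i) a time index claimed by two distinct increments, yielding conflicting values, or (ii) a time index skipped by every increment, leaving $\mathcal{S}_t$ undefined. Ruling out both is the heart of the argument.

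The key step is to show that the time-domains $\mathcal{D}(\bM_k)$ partition the index set into consecutive, pairwise-disjoint blocks. Recall that $\mathcal{D}(\bM_k) = \{L^T_k, \dots, L^T_k + \Delta^T_k - 1\}$ with $\lvert \mathcal{D}(\bM_k) \rvert = \Delta^T_k$. By continuity in time \eqref{ass:continuity-in-time}, $L^T_{k+1} = L^T_k + \Delta^T_k$, so the final index of $\mathcal{D}(\bM_k)$ is exactly $L^T_{k+1}-1$, immediately preceding the first index of $\mathcal{D}(\bM_{k+1})$. A short induction on $k$ then shows the blocks abut with neither gaps nor overlaps, so every $t$ in the index set lies in a unique $\mathcal{D}(\bM_{k(t)})$.

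Finally, for this uniquely determined active increment the spatial coordinate is read off $\tau(\bM_{k(t)})$: by the increment-trajectory definition this entry equals $L^S_{k(t)}$ in both the flight case and the pause case (for a pause every entry is the common value $L^S_{k(t)}$). Hence $\mathcal{S}_t = L^S_{k(t)}$ is single-valued, and the entire trajectory is uniquely determined by $\mathcal{M}$. I expect the only genuine obstacle to be the bookkeeping in the partition step — carefully disambiguating the union notation and confirming that the domains tile the index set — since once the domains are shown to tile, uniqueness of each $\mathcal{S}_t$ follows immediately from the explicit case formula.
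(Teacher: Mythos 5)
Your proof is correct, but it takes a genuinely different route from the paper's. You argue directly that $\tau(\mathcal{M})$ is a well-defined deterministic function of the motion: by continuity in time \eqref{ass:continuity-in-time} (together with \eqref{ass:flights-have-length-1}, which makes $\lvert\mathcal{D}(\bM_k)\rvert = \Delta^T_k$ valid for flights as well as pauses), the domains $\mathcal{D}(\bM_1),\dots,\mathcal{D}(\bM_K)$ abut with neither gaps nor overlaps, so each time index $t$ has a unique active increment $k(t)$ and $\mathcal{S}_t = L^S_{k(t)}$ is single-valued. The paper instead obtains the proposition as a corollary of its anchor-location machinery developed in the appendix: it defines anchor locations, proves two preliminary facts showing that anchor locations are exactly the starting locations of increments, deduces a bijection between anchor locations and realized increments (Fact \ref{fct:bijection}), and then recovers the full trajectory by noting that any non-anchor location must coincide with the most recent anchor location. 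Your argument is more elementary and self-contained, cleanly separating existence (the formula always evaluates) from uniqueness (the tiling of the time axis), and it proves exactly the stated claim. What the paper's heavier route buys is reusable infrastructure: the anchor-location bijection is precisely what drives the later observability results (Proposition \ref{prop:obs-inc-obs-anchors} and the proof of Proposition \ref{prop:obs-increments}), and it implicitly yields the converse direction---that a realized trajectory determines the underlying motion---which your argument does not address, though the proposition as stated does not require it.
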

\noindent
The proof can be found in the Appendix.

\subsection{From observed locations to observed increments}\label{subsec:observed-locations}

As a first step towards defining a data collection mechanism for increments, we introduce an \emph{observability indicator} for locations comprising a trajectory. The \emph{observability indicator} for $\mathcal{S}_t$ is defined as
$$
Z_t = 
  \begin{cases}
    1, & \text{if } \mathcal{S}_t \text{ is observed}, \\
    0, & \text{otherwise}.
  \end{cases}
$$
We let $\bZ = \left[Z_1, Z_2, \dots\right]^\top$ denote the vector of observability indicators indexed by time.  This notation follows a standard convention used in the missing data literature and assumes that we know precisely which locations are observed and which ones are not.  That is, inference on the $\bftheta$s in the FPM should be conditioned on the observed increments and the realized values of $\mathcal{S}$ and $\bZ$. 

We can use a similar notation to distinguish between the observed and missing increments. Specifically, we define $\xi_k = 1$ if increment $\bM_k$ is observed and $\xi_k=0$ otherwise and we say that $\xi_k$ is the \emph{observability indicator} for $\bM_k$. Similar to the notation introduced for locations we use $\bfxi = [\xi_1, \xi_2, \dots]$ to denote the sequence of all observability indicators ordered to correspond to the increments that comprise $\mathcal{M}$. This simple notation, however, belies a more complex reality of what it takes for an increment to be observed. In addition to an increment's origin in space and time, we must also know its duration and spatial displacement for the increment to be fully observed.  Therefore, several consecutive locations need to be observed. This requirement is made precise in the following:
\begin{proposition}\label{prop:obs-increments}
Let $\bM_k = (L_k, \Delta_k, R_k)$ be an increment contained in  motion $\mathcal{M}$. We have 
$$
\xi_k = 1 \iff \begin{cases} R_k = f &\land \quad \prod_{t \in \mathcal{D}(\bM_k) \cup \floor{\mathcal{D}(\bM_{k+1})}} Z_t = 1,\\
  R_k = p &\land \quad \prod_{t \in \mathcal{D}(\bM_k) \cup \mathcal{D}(\bM_{k+1}) \ceil{\mathcal{D}(\bM_{k-1})} \cup \floor{\mathcal{D}(\bM_{k+2})}} Z_t= 1,
\end{cases}
$$
where  $\floor{A}$, $\ceil{A}$ stand for the first and last element of an ordered set $A$, respectively.
\end{proposition}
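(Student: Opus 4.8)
The plan is to read $\xi_k = 1$ as the statement that the three components $L_k$, $\Delta_k$, and $R_k$ of the increment are recoverable as deterministic functions of the observed coordinates $\{\mathcal{S}_t : Z_t = 1\}$, and then to show that the coordinates this recovery actually consumes are precisely those indexed in the product. The fact I would use everywhere is that $L^S_k = \mathcal{S}_{L^T_k}$ by the construction of $\tau(\mathcal{M})$, so the origin of any increment is recoverable exactly when $Z_{L^T_k} = 1$; since $L^T_k = \floor{\mathcal{D}(\bM_k)}$, this requirement is already subsumed in both branches. I would prove the equivalence by giving, in each branch, explicit reconstruction formulas (which establish the $\Leftarrow$ direction) together with an ambiguity argument showing that omitting any single listed coordinate leaves some component of $\bM_k$ undetermined (which establishes $\Rightarrow$).

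For the flight branch, \eqref{ass:flights-have-length-1} fixes $\Delta^T_k = 1$, so $\mathcal{D}(\bM_k) = (L^T_k)$ and, by \eqref{ass:continuity-in-time}, $\floor{\mathcal{D}(\bM_{k+1})} = L^T_k + 1$. Continuity in space \eqref{ass:continuity-in-space-flights} gives $\Delta^S_k = L^S_{k+1} - L^S_k = \mathcal{S}_{L^T_k + 1} - \mathcal{S}_{L^T_k}$, so the displacement is recoverable iff both $Z_{L^T_k} = 1$ and $Z_{L^T_k + 1} = 1$; the type is certified from the same pair, since a flight moves ($\mathcal{S}_{L^T_k} \neq \mathcal{S}_{L^T_k + 1}$) whereas a pause holds its position. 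This collapses the flight condition to $\prod_{t \in \mathcal{D}(\bM_k) \cup \floor{\mathcal{D}(\bM_{k+1})}} Z_t = 1$.

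The pause branch carries the real content. I would first invoke \eqref{ass:no-cons-pauses} to record that $\bM_{k-1}$ and $\bM_{k+1}$ are both flights, so $\mathcal{D}(\bM_{k-1})$ and $\mathcal{D}(\bM_{k+1})$ are singletons. The stored displacement is handled by \eqref{ass:DeltaS}: $\Delta^S_k = \Delta^S_{k-1}$, and since $\bM_{k-1}$ is a flight, \eqref{ass:continuity-in-space-flights} gives $\Delta^S_{k-1} = L^S_k - L^S_{k-1} = \mathcal{S}_{L^T_k} - \mathcal{S}_{\ceil{\mathcal{D}(\bM_{k-1})}}$, forcing $Z_{\ceil{\mathcal{D}(\bM_{k-1})}} = 1$. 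The duration $\Delta^T_k$ is the subtle part: by continuity in space for pauses \eqref{ass:continuity-in-space-pauses} the following flight $\bM_{k+1}$ begins at the pause location, so the maximal run of coordinates equal to $\mathcal{S}_{L^T_k}$ extends from $L^T_k$ all the way through $L^T_{k+1}$ and only breaks at $L^T_{k+2} = \floor{\mathcal{D}(\bM_{k+2})}$. Reading off $\Delta^T_k$ as the run length minus one therefore requires observing every coordinate of the pause itself, i.e. all of $\mathcal{D}(\bM_k)$ (to certify the position is truly held rather than briefly vacated and re-occupied), the coordinate on $\mathcal{D}(\bM_{k+1})$, and the first departing coordinate at $\floor{\mathcal{D}(\bM_{k+2})}$. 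Collecting the four requirements reproduces exactly the index set in the pause branch.

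I expect the main obstacle to be the necessity half ($\Rightarrow$) in the pause branch, specifically the claim that $\floor{\mathcal{D}(\bM_{k+2})}$ cannot be dropped. Because the flight right after the pause masks its termination by lingering one step at the same location, the data restricted to $\mathcal{D}(\bM_k) \cup \mathcal{D}(\bM_{k+1})$ are equally consistent with a strictly longer pause, and only the coordinate at $\floor{\mathcal{D}(\bM_{k+2})}$ resolves this. I would make this precise by exhibiting two motions that agree throughout $\mathcal{D}(\bM_k) \cup \mathcal{D}(\bM_{k+1})$ but carry different values of $\Delta^T_k$, differing first at $\mathcal{S}_{L^T_{k+2}}$, witnessing genuine non-identifiability absent that coordinate; the analogous, easier constructions (an undetected out-and-back for a missing interior pause coordinate, and a held position for a missing flight endpoint) dispatch the remaining necessity claims. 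A last point I would flag is the behaviour near the ends of the motion, where one or more of $\bM_{k-1}$, $\bM_{k+1}$, $\bM_{k+2}$ fail to exist; there I would adopt the convention that a nonexistent neighbour contributes no factor to the product, and note that the known total length of $\mathcal{M}$ supplies the boundary information that $\floor{\mathcal{D}(\bM_{k+2})}$ would otherwise provide.
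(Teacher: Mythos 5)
Your proposal is correct and rests on the same underlying idea as the paper's proof --- reading $\xi_k=1$ as identifiability of $(L_k,\Delta_k,R_k)$ from the observed coordinates, established by explicit reconstruction formulas (sufficiency) together with ambiguous alternative motions (necessity) --- but it follows a genuinely different organization. The paper interposes an abstraction layer: it defines \emph{anchor locations} (time points at which the coordinate sequence changes value), proves that increment origins and anchor locations are in bijection (Fact \ref{fct:bijection}), characterizes observability of an increment in terms of \emph{consecutive observed anchor locations} (Proposition \ref{prop:obs-inc-obs-anchors}, proved via an explicit reconstruction map and a brief ambiguity remark), and only then translates that characterization into the stated products of $Z_t$'s. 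You collapse these stages into a single direct argument. What the paper's detour buys is reusability: the anchor machinery also underlies Proposition \ref{prop:unique-trajectory}, and it cleanly separates ``which time points mark increment starts'' from ``which are observed.'' What your route buys is brevity and, in places, more care than the paper itself provides: your two-flight ``lingering'' construction explaining why $\floor{\mathcal{D}(\bM_{k+2})}$ cannot be dropped is exactly the right witness and is nowhere spelled out in the paper, and the paper never discusses behaviour at the ends of the motion. One caveat to tighten: your opening claim that the origin is recoverable ``exactly when'' $Z_{L^T_k}=1$ is not immediate when $\bM_{k-1}$ is a fully observed pause, since continuity then reveals the coordinate value at $L^T_k$; non-identifiability in that case comes instead from an alternative motion that replaces ``pause ending one step later plus one flight'' by ``two flights through an arbitrary unobserved intermediate point,'' which is the same construction you already deploy for $\floor{\mathcal{D}(\bM_{k+2})}$ --- it should be invoked explicitly there rather than treated as subsumed.
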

In other words, we observe a flight when we know the two locations at its beginning and end (i.e. its trajectory and the beginning of the trajectory of the next increment). In order to observe a pause, we have to have access to all of its trajectory as well as the locations immediately proceeding and following it (i.e. the first and last elements of the trajectories of the preceding and succeeding increments, respectively).

We note here a key distinction between the observability indicators $\bZ$ and $\bfxi$.  For locations, $Z_t$ is defined and observed at every time increment; it is always known whether a location at a given $t$ was recorded.  In contrast, the observability indicator for increments, $\bfxi$, is not defined for every $t$. In fact, it requires different indexing for it will not generally be known how many increments are unobserved, as a sequence of missing locations may include a pause $\bM_k$ of length $\Delta^T_k>1$. The consequences of this fact and an alternative indexing scheme for observed increments are discussed in Section \ref{subsec:observed-data-likelihood}.

Finally we note that Proposition \ref{prop:obs-increments} implicitly assumes that the only reason why increments might be missing is due to certain locations along the trajectory begin unobserved. One could conceive of other reasons for missingness not directly linked to the function of MPT measurements (i.e. software error which randomly removed some previously recorded increments). Unless stated otherwise, we henceforth assume that missing increments are solely a consequence of missing locations.

\subsection{Effective sample size}\label{subsec:eff-samp-size}

The distinction between a motion (the object used to construct the FPM) and its trajectory (the set of points recorded by MPT) invites a corresponding distinction between two concepts of a sample size. Since most MPT devices record realized trajectories, it might seem intuitive to use the number of recorded locations as the sample size. In fact this is the number that is frequently reported \citep[e.g.][]{rhee2011levy}. At the same time, what is necessary for evaluating the FPM likelihood are increments. This motivates the following definition:
\begin{definition}\label{def:eff-samp-size}
We say that effective sample size is equal to $\sum_{k=1}^K \xi_k$, i.e. the number of observable increments.
\end{definition}
It follows that when a large number of locations are observed, the effective sample size-- i.e., information that can be used to infer $\bftheta$ might be small. For example, an MPT measurement scheme that collects every other location (i.e. $\forall l \in \mathbb{N}$, $Z_{2l} = 0$ and $Z_{2l+1} = 1$) will never record an increment, resulting in effective sample size $\sum_{k=1}^K \xi_k = 0$. As a consequence, the likelihood function \eqref{eq:complete-data-likelihood} cannot be calculated and the data have no information with which to make inference on the parameters of a flight-pause model.

\subsection{Observed-data likelihood}\label{subsec:observed-data-likelihood}

We now introduce notation for observable increments.  To begin, we note that every possible value of $\bfxi$ can be composed of alternating groups of consecutive observed and unobserved increments. 
Formally, we let $\mathcal{O} = \{\bM_k \in \mathcal{M} : \xi_{\bM} = 1\}$ and define $O= \{O_1, \dots, O_J\}$, a partition of $\mathcal{O}$ ordered from the smallest (1) to largest time index ($L_k^T$). We say that the index of the first increment in $O_j$ is $I_j$. In other words $\floor{O_j} = \bM_{I_j}$. We also define $N_j = |O_j|$, the number of increments in the $j$-th observed block. This means that the last location in the observed block of increments is $\ceil{O_j} = \bM_{I_j + N_j - 1}$. 

We denote the blocks that make up the unobserved increments of the motion as $U_j$, with $\mathcal{U}= \{\bM_k \in \mathcal{M} : \xi_k = 0\}$. Specifically, $U_j = \left(\bM_{I_j + N_j}, \dots, \bM_{I_{j+1} - 1}\right)$. In this way, we have $\mathcal{M} = (O_1, U_1, O_2, U_2, \dots, O_{J-1}, U_{J-1}, O_J)$. Using this notation, as well as $\bfpsi$ to denote the additional parameters related to the distribution of $\bfxi$ ,we can now write the observed data likelihood in terms of blocks of observed increments:
\begin{multline}\label{eq:obs-data-integral}
q(\mathcal{O}, \bfxi|\bftheta, \bfpsi) = \int q(\mathcal{O}, \mathcal{U}|\bftheta) q(\bfxi|\mathcal{O}, \mathcal{U}, \bfpsi) d\mathcal{U} = q(O_1| \bftheta) \prod_{j>1}^J \prod_{k = I_j + 1}^{I_j + N_j - 1}  q(\bM_k|\bM_{k-1}, \bftheta) \\ \int \prod_{j>1}^J \prod_{k=I_j + N_j}^{I_{j+1} - 1} q(\bM_k|\bM_{k-1}, \bftheta) q(\bM_{I_j}|\bM_{I_j-1}, \bftheta) q(\bfxi|\mathcal{O}, \mathcal{U}, \bfpsi) d\mathcal{U}.
\end{multline}
Expression \eqref{eq:obs-data-integral} constitutes the general form of the observed data likelihood, where the first set of products contains indices of increments contained within $\mathcal{O}$ and the second set of products has indices corresponding to increments contained within $\mathcal{U}$.  Appendix \ref{app:inference-obs-data} includes derivation of the special case of \eqref{eq:obs-data-integral} under the standard parametrization introduced in Section \ref{subsec:standard-param}.

Note the explicit dependence of \eqref{eq:obs-data-integral} on $q(\bfxi|\mathcal{O}, \mathcal{U}, \bfpsi)$, which corresponds to the data collection mechanism for increments - a probability model dictating which elements of a motion are observed.  Dependence of (\ref{eq:obs-data-integral}) on this quantity clarifies that, in general, inference in the FPM with incomplete trajectory data will require assumptions about the form of this data collection mechanism.

Before discussing data collection mechanisms for increments, we introduce an assumption that can drastically simplify evaluation of  \eqref{eq:obs-data-integral}: 
\begin{assumption}\label{ass:indep-blocks}
Assume that for each $j = 1, \dots, J$ and $\bM_{k_1}, \bM_{k_2}$ such that $\bM_{k_1} \in O_j$ and $\bM_{k_2} \in \mathcal{M} \setminus O_j$ we have $\xi_{k_1} \perp \xi_{k_2} \;|\; O_j \cup \left\{\xi_k : \bM_k \in O_j \right\}$. Assume further that if $\bM_{k_1}, \bM_{k_2}$ are such that $\bM_{k_1} \in O_j \cup U_j$ and $\bM_{k_2} \in \mathcal{M} \setminus \left(O_j, \cup U_j\right)$ then $\bM_{k_1} \perp \bM_{k_2}$.
\end{assumption}
This assumption implies that the observed blocks are independent and that the observability indicators depend only on other increments within the same block \citep[see e.g. for a similar approach][]{deChaumaray2020mixture}. 

Thus under \eqref{ass:indep-blocks}, we have $q(\bfxi|\mathcal{O}, \mathcal{U}) = \prod_{j=1}^{J-1} q(\bfxi_{O_j, U_j}|O_j, U_j) q(\bfxi_{O_J}|O_J)$ and \eqref{eq:obs-data-integral} can be expressed as
$$
q(\mathcal{O}, \bfxi | \bfpsi, \bftheta) = \prod_{j=1}^J q(\bM_{I_j}|\bftheta)q(\bfxi_{I_J}=1|I_j, \bfpsi) \prod_{k = I_j}^{I_j + N_j - 1} q(\bM_k|\bM_{k-1}),
$$
where $\bfxi_{O_j, U_j}$ and $\bfxi_{O_j}$ stand for, respectively, the vector of observability indicators for the elements of the blocks $O_j$ and $U_j$ jointly and just $O_j$. Intuitively, this means that we treat each observed block as a distinct trajectory depending on a common set of parameters. Therefore, \eqref{ass:indep-blocks} can be a good approximation of the truth if the unobserved blocks are large (i.e., if $I_{j+1}-I_j -N_j$ is large). Evaluating the likelihood under \eqref{ass:indep-blocks} is equivalent to calculating the composite likelihood with identical weights \citep{lindsay1988composite, varin2011overview}. 

\section{Data collection mechanisms}\label{sec:data-collection}

The previous section introduces a framework for studying the implications of data collection mechanisms $q(\bfxi | \mathcal{M}, \bftheta, \bfpsi)$. We now explore different missing-data mechanisms (i.e., models for $\bfxi$) and their implications for inference on $\bftheta$. 

To study the impacts of possible data collection mechanisms, we first briefly review the classic missing data framework, as described \citet{little2019statistical} and \citet{gelman2013bayesian}. Consider the joint likelihood,
\begin{equation}\label{eq:likelihood-increments-indicators}
    q(\mathcal{M}, \bfxi| \bftheta, \bfpsi) = q(\mathcal{M}| \bftheta)q(\bfxi|\mathcal{M}, \bfpsi, \bftheta),
\end{equation}
where the first term on the right-hand side is given in Proposition \ref{thm:standard-param-likelihood} and the equality holds because $\bfpsi$ are used to parametrize only the distribution of $\bfxi$. We use the second term, which we call data-collection mechanism, to express various assumptions regarding the observation pattern. 

We start with a simple example of a mechanism, which might sometimes be used to conserve battery but which actually results in no missing data.

\begin{example}[(Movement-triggered data collection)]
Consider a mechanism that starts collecting data once movement is detected (i.e. using accelerometer which is found in most modern smart phones) and stops when a pause in movement is detected. In particular, we assume that 
$$
q(\bfxi|\mathcal{M}, \bfpsi, \bftheta) = \prod_{k=1}^K q(\xi_k|\bM_k),
$$
where $q(\xi_k|\bM_k) = \mathbbm{1}(R_k=f)$. Notice that even though we technically suspend data collection for the duration of pauses, knowing that the pauses are the only unobserved increments actually allows us to have complete information about them, i.e. no data are missing.
\end{example}

\subsection{Ignorable mechanisms}

The key distinction in the study of missing data is between contexts in which the data collection mechanism is \textit{ignorable} and the ones in which it is not. Ignorability is equivalent to two conditions. The first one, called parameter distinctness, requires that the second term on the right hand side of \eqref{eq:likelihood-increments-indicators} takes the form
$$
q(\bfxi|\mathcal{M}, \bfpsi, \bftheta) = q(\bfxi|\mathcal{M}, \bfpsi),
$$
which means that the parameters governing the model for the complete data are distinct from the parameters regulating the data collection mechanism.

The second condition, which is often expressed by saying that the data is \textit{missing at random} (MAR), demands that the data collection mechanism is also independent of the unobserved variables. Mathematically, within the context of our model, this means that
$$
q(\bfxi|\mathcal{O}, \mathcal{U}, \bfpsi, \bftheta) = q(\bfxi|\mathcal{O}, \bfpsi, \bftheta),
$$
since $\mathcal{M} = \mathcal{O} \cup \mathcal{U}$.
An important special case of this condition, called \textit{missingness completely at random} (MCAR), further constrains the data collection mechanism to be independent of the data, i.e. $q(\bfxi|\mathcal{M}, \bfpsi, \bftheta) = q(\bfxi|\bfpsi, \bftheta)$. 

In summary, in the FPM, the data collection mechanism is ignorable if and only if
$$
    q(\bfxi|\mathcal{M}, \bfpsi, \bftheta) = q(\bfxi|\mathcal{O}, \bfpsi).
$$
To provide concreteness, the following example describes a somewhat unrealistic ignorable data collection mechanism for MPT data. 
\begin{example}[(Random increment corruption)]
Consider a data collection scheme where timestamped locations are measured then, after processing and storing these locations as increments,  buggy software leads to the deletion of some of the increments in $\mathcal{M}$. For example after every $l$ recorded increments, there is one increment missing. Then another $l$ increments are recorded etc. In this case $q(\bfxi|\mathcal{M}, \bftheta, \bfpsi) = q(\bfxi|\bfpsi)$ so data mechanism is ignorable. Note that in this case increments are missing for reasons unrelated to the availability of the underlying locations.\qed
\end{example}

If a mechanism is ignorable, inference of $\bftheta$ can be conducted without explicitly modeling the data collection mechanism.  That is, inference on $\bftheta$ can be obtained using the complete data likelihood:
\begin{align*}
q(\mathcal{O}, \bfxi|\bftheta, \bfpsi) &= \int q(\mathcal{M}, \bfxi| \bftheta, \bfpsi) d\mathcal{U} = \int q(\mathcal{M}| \bftheta) q(\bfxi | \mathcal{M}, \bfpsi) d\mathcal{U} = \\
&= \int q(\mathcal{O}, \mathcal{U} | \bftheta) q(\bfxi|f(\mathcal{O}), \bfpsi) d\mathcal{U} \\
&\propto \int q(\mathcal{O}, \mathcal{U} | \bftheta) d\mathcal{U}.
\end{align*}

\subsection{Non-ignorable mechanisms}\label{subsec:non-ignorable}

In the context of MPT data, many common data collection mechanisms turn out not to be ignorable for the FPM. Consider the following simplistic example.

\begin{example} 
An MPT instrument records flights with probability $\rho_f$ and pauses with probability $\rho_p \neq \rho_f$.  It follows that, 
$$q(\bfxi|\mathcal{\bM}) = \prod_{k=1}^K q(\xi_k|R_k) = \prod_{k=1}^K \rho_p\mathbbm{1}(R_k=p) + \rho_f\mathbbm{1}(R_k=f).$$ 
Since the probability of observing an increment depends on its potentially-unobserved increment type, the data are not missing at random and the mechanism is not ignorable for the FPM. \qed
\end{example}

More realistically, an MPT device might collect \emph{locations} only during certain prescribed intervals in time (e.g., alternating between recording for one minute and not recording for one minute) to save battery power. \citet{barnett2020inferring} observe that this approach is used by a popular Beiwe app \citep{torous2016new} and study its different versions.

\begin{example}[(The on-off mechanism)]\label{ex:onoff}
An MPT device alternately records locations for some prescribed time interval $o$ and then the collection is suspended for another time interval $u$. We show that such a scheme leads to increments which are MNAR for the FPM. 

In order to describe this mechanism formally, for a time $t \in \mathbb{N}$ define $B_t = \floor*{\frac{t}{o + u}}$ and $B'_t = B_t + o$. Notice that when $\floor{\cdot}$ is applied to a totally ordered set or sequence it denotes their minimal element, while when applied to a real number it denotes the largest integer smaller than that number. 
Then let $z_t = 1$ for $t \in \left[B_t, B'_t\right)$ and $z_t = 0$ when $t \in \left[B_t', B_t'+u\right)$. An illustration of this pattern of $o = u$ is shown in Figure \ref{fig:simulations-obs-pattern}.

With this observation scheme, only pauses shorter than $o$ can be observed. This is because we never observe more than $o$ consecutive locations. Therefore, even if we happen to observe the start a long pause $\bM_k$ and its immediately succeeding flight $\bM_{k+1}$, we would not be able to observe all of its trajectory $\tau(\bM_k)$, because $|\mathcal{D}(\bM_k)|>o$ By Proposition \ref{prop:obs-increments} this means that $\bM_k$ could not be observed.\qed
\end{example}

It is worth pointing out that in Example \ref{ex:onoff} as in several others throughout this section and elsewhere in the literature the data collection mechanism is defined for locations. In other words, assumptions about missingness pertain to the random observation indicator $\bz$ and not $q(\bfxi|\bfpsi, \mathcal{M})$. Obviously, these two mechanisms are closely related. However, since the FPM is a model for motions and their increments, assumptions such as ignorability must be formualted in terms of $q(\bfxi|\bfpsi, \mathcal{M})$. Thus, even though often times the missing data mechanism for MPT data may be assumed ignorable with respect to locations, the resulting missing data mechanism for increments may not be ignorable. To see this more clearly consider the following unrealistic but illustrative 

\begin{example}\label{ex:locationsMAR-unrealistic} Assume we pause data collection once $\mathcal{S}_t = \mathcal{S}_{t+1}$ (i.e. once two consecutive spatial locations are identical) and resume it at $t + c$ for $1 < c \in \mathbb{N}_{+}$. We see that in such a scheme locations are MAR but increments are MNAR because no pauses will ever be observed. The reason for this discrepancy is that the mechanism for collecting increments is turned off once it determined it is in the middle of collecting a pause. \qed
\end{example}

We conclude with a more realistic version of Example \ref{ex:locationsMAR-unrealistic}.

\begin{example}[(Geometric gaps)]\label{ex:geometric-gaps}
Consider a malfunctioning MPT data collection device that records every location with probability $\eta = 1-\epsilon$, where $\epsilon$ is some small positive number. Under this data collection mechanism,  the length of the observed and unobserved blocks are random, each following a geometric distribution with success probabilities $\epsilon$ and $\eta$, respectively.
This data collection scheme is MNAR for the FPM. To show this, 
we can calculate the probability that a pause $\bM_k$ is observed as
$$
q(\xi_k=1|\bM_k) = q(\xi_k=1|\Delta^T_k) = \eta^{\Delta^{T_k+3}}.
$$
Similarly, the probability of observing a flight is
$$
q(\xi_k = 1|\bM_k) = \eta^2.
$$\qed
\end{example}

In general, if the mechanism is not ignorable, inference requires that we calculate the entire integral $\int q(\mathcal{M}| \bftheta) q(\bfxi | \mathcal{M}, \bfpsi) d\mathcal{U}$. Proposition \ref{thm:observed-data-likelihood} shows how this can be done under the standard parametrization in the case of the data collection mechanism described in Example \ref{ex:onoff} and Example \ref{ex:geometric-gaps}. 

\begin{proposition}\label{thm:observed-data-likelihood}
Under the standard parametrization, using the symbols defined in Section \ref{app:inference-mnar}, \eqref{ass:indep-nablas} and assuming the data collection mechanism as in Example \ref{ex:onoff} the log of the observed data likelihood can be expressed as
\begin{multline}\label{eq:log-obs-data-standard}
    \log q(\mathcal{O}|\bftheta) = \left(\sum_{j=1}^J|P_j| + \sum_{j=1}^{J-1} \delta_j \right) \log\theta_1 + 
        \left( \sum_{j=1}^J |F_j| \right) \log\left(1-\theta_1\right)
        + \left(\sum_{j=1}^J|P_j| + \sum_{j=1}^{J-1}\gamma_j\right)\log\theta_2 \\
        + \left( \sum_{k \in P_j} \Delta^T_k  - \sum_{j=1}^J|P_j| + \sum_{j=1}^{J-1}d_j + g_{j+1}\right)\log\left(1-\theta_2\right) + \sum_{j=1}^{J-1} \log Q_{\delta_j+1, \gamma_{j+1}+ 1}(|U_j|) + \\
        + \sum_{j=1}^J \log q\left(\Delta^S_{I_j}|\bftheta\right) -\frac{|\mathcal{F}\cap O_j| - 1}{2}\log \left(2\pi\theta_4\right) - \frac{1}{2\theta_4} \sum_{k = I_j + 1}^{I_j + N_j - 1} \left(\Delta^S_k - \theta_3\Delta^S_{k-1}\right)^2
\end{multline}
where $Q_{m,l}(n)$ is the $m,l$-th entry of the matrix $\frac{1}{\theta_1 + \theta_2}\left(\left[\begin{array}{cc}\theta_2 & \theta_1 \\ \theta_2 & \theta_1\end{array}\right] + (1 - \theta_1 - \theta_2)^n \left[\begin{array}{cc}\theta_1 & -\theta_1 \\ -\theta_2 & \theta_2\end{array}\right]\right)$.
\end{proposition}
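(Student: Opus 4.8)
The plan is to evaluate the observed-data likelihood integral \eqref{eq:obs-data-integral} by substituting the standard-parametrization complete-data likelihood of Proposition \ref{thm:standard-param-likelihood} and carrying out the integration over the unobserved block $\mathcal{U}$ in two decoupled pieces. The complete-data density factorizes into a \emph{temporal} factor $\theta_1^{|\mathcal{P}|}(1-\theta_1)^{|\mathcal{F}^f|}\theta_2^{|\mathcal{P}|}(1-\theta_2)^{\sum_{k:R_k=p}\Delta^T_k}$, depending on the types $R_k$ and pause durations $\Delta^T_k$ only through $(\theta_1,\theta_2)$, and a \emph{spatial} factor $\prod_{k>1}\prod_{i}\normal((\Delta^S_k)_i;\theta_3(\Delta^S_{k-1})_i,\theta_4)$, depending on the displacements only through $(\theta_3,\theta_4)$. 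Because these two groups of parameters are disjoint and, under assumption \eqref{ass:indep-nablas}, the displacement process may be treated as independent across blocks, the integral $\int q(\mathcal{M}|\bftheta)q(\bfxi|\mathcal{M},\bfpsi)\,d\mathcal{U}$ splits into a product of a temporal sum over the unobserved type/duration configurations and a spatial integral over the unobserved displacements. The deterministic on-off mechanism of Example \ref{ex:onoff} enters only through $q(\bfxi|\mathcal{M},\bfpsi)$: it fixes the total duration $|U_j|$ of each gap and, via Proposition \ref{prop:obs-increments}, forbids any internally observable increment, so the temporal sum ranges over \emph{all} flight/pause configurations of that fixed total duration.

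For the temporal part, the key observation is that the per-time-unit state process -- recording at each integer time whether the device is in a flight or in a pause -- is a two-state Markov chain. By specification \ref{spec:constant-pause-prob} a flight is followed by a pause with probability $\theta_1$, by \eqref{ass:no-cons-pauses} a pause is always followed by a flight, and by specification \ref{spec:geom} a pause continues for one more time unit with probability $1-\theta_2$; these are exactly the one-step probabilities of the chain with transition matrix $\left[\begin{smallmatrix}1-\theta_1 & \theta_1\\ \theta_2 & 1-\theta_2\end{smallmatrix}\right]$ (state $1$ a flight, state $2$ a pause). I would first check that the product of these one-step probabilities along a realized type path reproduces the temporal factor of Proposition \ref{thm:standard-param-likelihood}. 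I would then integrate out each gap $U_j$: summing the temporal factor over every internal path of total duration $|U_j|$ that leaves $O_j$ in the boundary state indexed by $\delta_j+1$ and enters $O_{j+1}$ in the state indexed by $\gamma_{j+1}+1$ is, by Chapman--Kolmogorov, the $(\delta_j+1,\gamma_{j+1}+1)$ entry of the $|U_j|$-step transition matrix. Diagonalizing the transition matrix (eigenvalues $1$ and $1-\theta_1-\theta_2$, with stationary vector proportional to $(\theta_2,\theta_1)$) yields the stated closed form for $Q_{m,l}(n)$. The within-block transitions that are actually observed contribute the $\log\theta_1$, $\log(1-\theta_1)$, $\log\theta_2$ and $\log(1-\theta_2)$ terms, whose exponents are the block-wise counts $|P_j|$, $|F_j|$ and $\sum_{k\in P_j}\Delta^T_k$ together with the boundary correction counts $\delta_j,\gamma_j,d_j,g_{j+1}$ defined in Section \ref{app:inference-mnar}.

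For the spatial part, assumption \eqref{ass:indep-nablas} severs the AR(1) coupling across gaps, so the displacement contribution factorizes over observed blocks and the displacements of the unobserved flights inside each gap integrate to one; pauses contribute no Gaussian factor since, by \eqref{ass:DeltaS}, their displacement is a deterministic copy of the preceding flight's. Within a block $O_j$, the first flight displacement contributes its marginal $q(\Delta^S_{I_j}|\bftheta)$ and each subsequent flight contributes the two-coordinate normal transition density of specification \ref{spec:gaussian-increments}. Taking logarithms of the product of these densities, the $|\mathcal{F}\cap O_j|-1$ transition terms produce the normalizing constant and the quadratic exponent $-\tfrac{1}{2\theta_4}\sum_{k=I_j+1}^{I_j+N_j-1}(\Delta^S_k-\theta_3\Delta^S_{k-1})^2$ appearing in \eqref{eq:log-obs-data-standard}.

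Summing the logarithms of the cross-gap factors $Q_{\delta_j+1,\gamma_{j+1}+1}(|U_j|)$, the within-block temporal counts, and the within-block spatial terms over $j$ then gives \eqref{eq:log-obs-data-standard}. I expect the main obstacle to be the boundary bookkeeping in the temporal step: because the on-off mechanism fixes each gap's duration but leaves the number and types of the increments inside it unknown, one must attribute each one-step transition at the edge of a gap either to the observed-data product or to the matrix factor $Q$ exactly once, which is precisely what the auxiliary counts $\delta_j,\gamma_j,d_j,g_{j+1}$ are designed to track; verifying that these corrections assemble into \eqref{eq:log-obs-data-standard} without double-counting is the delicate part of the argument.
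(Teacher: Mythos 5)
Your proposal is correct and follows essentially the same route as the paper's proof: both substitute the standard parametrization into the observed-data integral \eqref{eq:obs-data-integral}, invoke Assumption \ref{ass:indep-nablas} to sever the Gaussian displacement coupling across gaps, and reduce the temporal integration over each unobserved block to the $(\delta_j+1,\gamma_{j+1}+1)$ entry of the $n$-step transition matrix of the two-state per-time-unit flight/pause Markov chain with transition matrix $\left[\begin{smallmatrix}1-\theta_1 & \theta_1\\ \theta_2 & 1-\theta_2\end{smallmatrix}\right]$, handling the edges of each gap with the same bookkeeping counts $\delta_j,\gamma_j,d_j,g_{j+1}$. The only cosmetic difference is that you obtain the closed form of $Q(n)$ by diagonalization, whereas the paper verifies it by induction.
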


The proof of this proposition can be found in the Appendix and relies on representing the increments' durations and types using a Markov chain and assuming that the direction and length of flights in a given observed block are independent of these properties of flights in other observed blocks.

\section{Motion/trajectory imputation}\label{sec:imputation}

With a formal framework to account for the mechanism dictating which increments are recorded, the observed data likelihood and the data collection model give us a practical tool to estimate the model parameters, which are generally unknown. This, in turn, opens the door to generating (imputing) the missing parts of the trajectory under the FPM. 

\subsection{Motivation}\label{subsec:imputation-motivation}

There are several reasons why imputation might be of interest. For example, a researcher may be using partially observed MPT data to measure exposure to some phenomena \citep{yi2019methodologies} associated with a specific point or area in geographic space\footnote{Estimating the duration of an individual's exposure to geographically-referenced exposure source is a common problem in environmental health \citep[e.g.][]{lippmann2009environmental, henneman2021comparisons}, social determinants of health \citep[e.g.][]{braveman2011social, viner2012adolescence}, and contextual effects \citep[e.g.][]{alexander1975contextual, browning2021human, erbring1979individuals} applications.}.  If total exposure ``dose" is assumed to be proportional to time spent at a point or in an area \citep[e.g.][]{nyhan2019quantifying}, the total dose may be underestimated if the missing MPT data are ignored. We focus on this perspective throughout the remainder of the paper. Alternatively, one might be interested in filling in gaps in MPT data for the purpose of visualization, or even simulating a trajectory in an agent-based models \citep[e.g.][]{qiao2018role}. 

\subsection{Imputation algorithm}\label{subsec:imputation-algo}

We now present a plug-in method for generating a single imputation of the missing MPT data under the FPM. At the same time, multiple imputations of the same missing part of the motion might often be desireable \citep{meseck2016missing}, like in the experiment we conduct in Section \ref{sec:simulations}.

Our procedure consists of two steps. First, we use the observed increments $\mathcal{O}$ to calculate the observed data likelihood \eqref{eq:obs-data-integral} and find the vector $\hat{\bftheta}$, for which it attains a maximum.

Second, we generate a draw from $q(\mathcal{U}|\hat{\bftheta}, \mathcal{O})$. The former is fairly straightforward, and a number of standard  optimization approaches can be used. The latter is more complicated. To see why, recall the notation introduced in Section \ref{sec:data-model} which allows us to write
$$
q(\mathcal{U}| \hat{\bftheta}, \mathcal{O}) = \prod_{j=1}^{J-1} q(U_j\;|\;\hat{\bftheta}, \mathcal{O}) = \prod_{j=1}^{J-1} q(U_j\;|\;\hat{\bftheta}, \bM_{I_j + N_j - 1}, \bM_{I_{j+1}}),
$$
where the last equality is due to \eqref{ass:Markov} (Markovianity).
Under standard parametrization each term in the product can be further expressed as

\begin{equation*}
    q(U_j\;|\;\hat{\bftheta}, \bM_{I_j + N_j - 1}, \bM_{I_{j+1}}) = \prod_{k = I_j + N_j}^{I_{j+1} - 1} q(\bM_k\;|\;\hat{\bftheta}, \bM_{k-1}, \bM_{I_{j+1}})
\end{equation*}

It is difficult to sample from this distribution because, as noted at the end of Section \ref{sec:data-model}, the number of increments that need to be sampled is unknown, or equivalently, $\vert U_j \vert$ is unknown for all $j$. Moreover, the exact form of $q(\bM_k|\hat{\theta}, \bM_{k-1}, \bM_{I_{j+1}})$ is also difficult to derive. To overcome these challenges we use the following simplification: we sample increment types from $q(R_k|R_{k-1}, \theta_1)$ and, whenever $R_k = p$, we also sample the pause duration from $q(\Delta^T_k|R_k = p, \hat{\theta}_2).$ We label the increments sampled in this way as $\hat{U_j} = \left(\hat{\bM}_{I_j + N_j}, \hat{\bM}_{I_j + N_j + 1}, \dots \hat{\bM}_{I_j + N_j + d}\right)$. Now if we use $d$ to stand for the total duration of the sampled flights (i.e., $d = \sum_{k=I_j + N_j}^{I_j + N_j + d}\Delta^T_k$), then we can express the criterion for when to stop sampling as 
\begin{equation}\label{eq:stopping} 
L^T_{I_j + N_j - 1} + d \geq L^T_{I_{j+1}}.
\end{equation}
In words, we stop sampling when the time at which the last sampled increment starts immediately precedes the start of first increment in $I_{j+1}$. Note that \eqref{eq:stopping} will hold with equality if $\hat{\bM}_{I_j + N_j + d}$ (the last sampled increment) is a flight (i.e., $R_{I_j + N_j + d}= f$). If $R_{I_j + N_j + d} = p$, then we might need to adjust $\Delta^T_{I_j + N_j + d}$ such that $L^T_{I_j + N_j + d} + \Delta^T_{I_j + N_j + d} = L^T_{I_{j+1}}$. This strategy is followed, for example, by \citet{barnett2020inferring}. One can also use a bridging technique described therein in order to ensure the continuity of the trajectory.

Once the increment types are sampled -- implying the number of flights is known -- we can use the forward filter-backward sampler algorithm (see \citet{Fruhwirth1994, Carter1994, Durbin2002} with a scalable version in \citet{jurek2022ffbs}) to ensure trajectory continuity. This requires representing flights $\mathcal{F}$ as a state-space model, as described in Appendix \ref{app:state-space-model}. 

\section{Numerical simulations}\label{sec:simulations}

In this section, we use simulated data to study certain properties of the FPM and its standard parametrization. To illustrate the importance of properly accounting for the data collection mechanism, we show that inaccurate assumptions about the type of missingness leads to biased inference. Second, we show that the imputation method introduces in Section \ref{sec:imputation}  outperforms the existing methods according to some metrics.

\subsection{Parameter estimation}\label{subsec:simulations-parameters}

We generate motions $\{\mathcal{m}_\eta\}_{\eta=1}^{N_{\text{tr}}}$ with $N_{\text{tr}} = 100$ from the model described in Section \ref{sec:model} under the standard parametrization. We set the time limit $t_{\max} = 1000$ and $\theta = (0.1, 0.1, 0.95, 1)$. We assume that $L^S_1 = (0, 0)$ and that $q(\bM_1) = \mathbbm{1}(L_1^S = (0, 0), L_1^T = 1, \Delta_1^T = 1, R_1 = f)\cdot \normal(\Delta^S_1; 0, 1)$. We then mimic the on-off mechanism (Example \ref{ex:onoff} in Section \ref{subsec:non-ignorable} by masking the locations at certain prescribed intervals. Specifically, we assume that $z_t = 1$ if $t \in \cup_{l \in \mathbb{N}} [2l\cdot o, 2l\cdot o + 1)$, where $o=25$ and that $z_t = 0$ otherwise. Recall that this on-off data collection mechanism, illustrated in Figure \ref{fig:simulations-obs-pattern}, is MNAR.

Using the observed locations and Proposition \ref{thm:observed-data-likelihood}, we calculate the observed increments and find $\hat{\bftheta}$, the value of $\bftheta$ that maximizes the observed data likelihood. A histogram of these estimates for all simulated motions is shown in Figure \ref{fig:param-estimation-u}. For comparison, we also show the distribution of the maximum likelihood estimates calculated under the (incorrect) assumption that the increments are MAR (Figure \ref{fig:param-estimation-b}). These results clearly indicate the need for accounting for the data collection mechanism, as some of the estimates obtained under the MAR assumption exhibit a significant bias.

\begin{figure}
    \centering
    \includegraphics[width=0.6\textwidth]{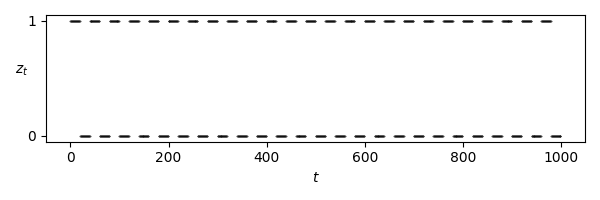}
    \caption{The pattern of missing locations described in Example \ref{ex:onoff} with $o=u=20$. Sometimes, the blocks of unobserved locations, i.e. intervals where $z_t=0$ can be longer than the observed blocks (where $z_t=1$) in order to conserve battery power.}
    \label{fig:simulations-obs-pattern}
\end{figure}

\begin{figure}[ht]
  \centering
  \begin{subfigure}{1.0\textwidth}
    \includegraphics[width=1.0\textwidth]{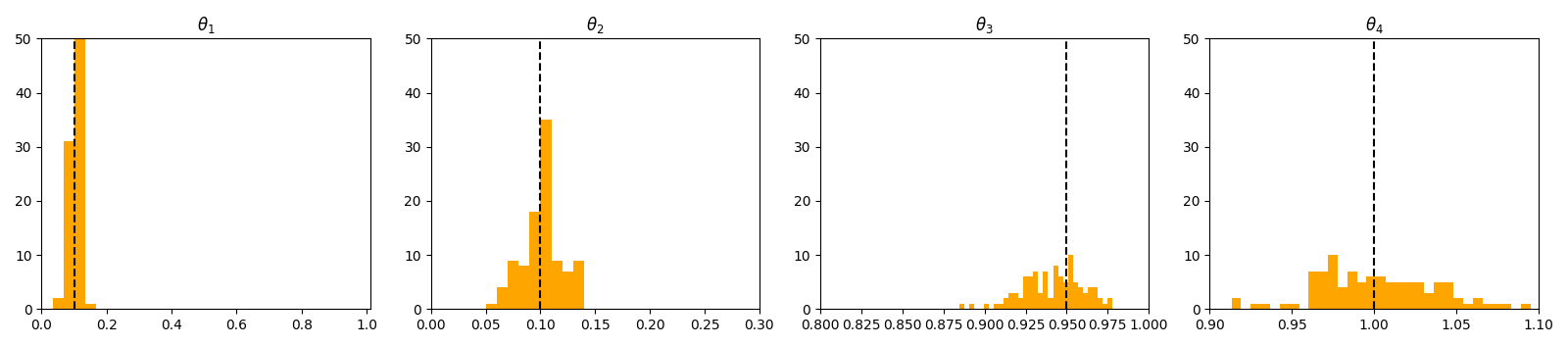}
    \caption{Unbiased estimation}
    \label{fig:param-estimation-u}
  \end{subfigure}
  \begin{subfigure}{1.0\textwidth}
    \includegraphics[width=1.0\textwidth]{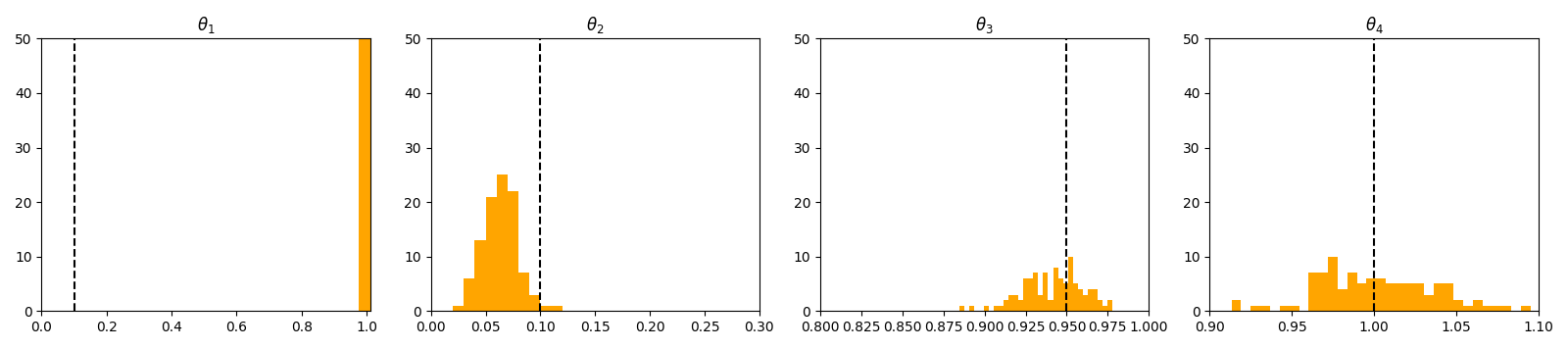}
    \caption{Biased estimation}
    \label{fig:param-estimation-b}
  \end{subfigure}
  \caption{The upper panel shows histograms of the estimates, obtained by maximizing the observed data likelihood, of the components of $\bftheta$ across the simulated motions. The dashed vertical lines indicate the true value. The lower panel displays the histograms of the parameter values $\hat{\bftheta}$ that maximize the observed data likelihood for each of the simulated motions without accounting for the data collection mechanism. The dashed vertical lines indicate the true value. Only the estimates of the first two parameters are biased, which is why the last two histograms (displaying parameters related to flights) look the same, analogous to the pattern in Figure \ref{fig:application-params}. Note that $\theta_1$ is always estimated to be 1 because very few pauses are observed since $|\mathcal{P} \cap \mathcal{O}|$ is very small.}
  \label{fig:param-estimation}
\end{figure}

\subsection{Trajectory interpolation}\label{subsec:simulations-imputation}

In certain situations, for example when evaluating exposure, imputing the missing part of the motion may be of greater importance than parameter estimation. To examine the ability of our model to accopmlish this task we compare the following three imputation methods:
\begin{description}
\item[\textbf{linear interpolation}:] Linear interpolation can be viewed as a special case of the procedure described in Section \ref{subsec:imputation-algo}, which assumes that the entire gap should be filled in with flights, each of which covers the same fraction of the interval $[L^S_{I_j + N_j -1}; L^S_{I_{j+1}}]$ \citep{rhee2011levy, shin2007human}.

\item[unadjusted non-parametric method:] This method was originally proposed in \cite{barnett2020inferring}. Within this framework the distributions $q(\Delta^S_k|R_f = f, \bftheta)$ and $q(\Delta^S_k|R_k = p, \bftheta)$ are taken to be the weighted sample distributions of the flights and pauses that were recorded shortly before or after $L^T_k$, with increments closer in time having a greater weight. 
The authors also propose other approaches to estimating these distributions, all similar in their lack of adjustment for the MNAR sampling mechanism. A comprehensive comparison is beyond the scope of this paper.

\item[adjusted parametric method:] The plug-in imputation method described in Section \ref{sec:imputation} under the standard parameterization. This approach consists of estimating the model parameters using the observed increments and adjustment for the data collection model, then using the maximum likelihood estimates to impute missing parts of the trajectory.
\end{description}

We consider three data collection schemes:
\begin{description}
\item[\textbf{Unscheduled gap}:] Define the gap be the interval 
$$G_U(\alpha) = \left[ t_\text{max}/2 - \alpha t_\text{max}/2; t_\text{max}/2 - \alpha t_\text{max}/2\right]$$
and set $z_t = \mathbbm{1}(t \not\in G(\alpha))$. In this way, we mask the middle part of the trajectory which is of length $\alpha t_\text{max}$, where $\alpha=0.2 + l\frac{0.6}{49}$ for $l=0, 1, \dots 49$. This masking could correspond to a scenario in which a data collecting device malfunctions and does not record locations for a significant period of time (up to 80\% of the study period), or when the signal is lost due to the characteristics of the built environment (i.e. thick walls).
\item[\textbf{Unscheduled gap + short scheduled gaps}:] Recall the ``on-off" scheme described in Example \ref{ex:onoff} and set $o = u = 25$. The scheduled gaps are defined to be 
$$G_S(u, o) = \left\{t: t \in \left[B_t'; B_t'+I_u\right)\right\}.$$
We set $z_t = \mathbbm{1}(t \not\in G_U(\alpha) \cup G_S(u, o))$.
\item[\textbf{Unscheduled gap + long scheduled gaps}:] Similar to the previous case except that the scheduled gaps have length $o=u=50$.
\end{description}
All three schemes are illustrated in Figure \ref{fig:missing-data-schemes}:

\begin{figure}
    \centering
    \includegraphics[scale=0.6]{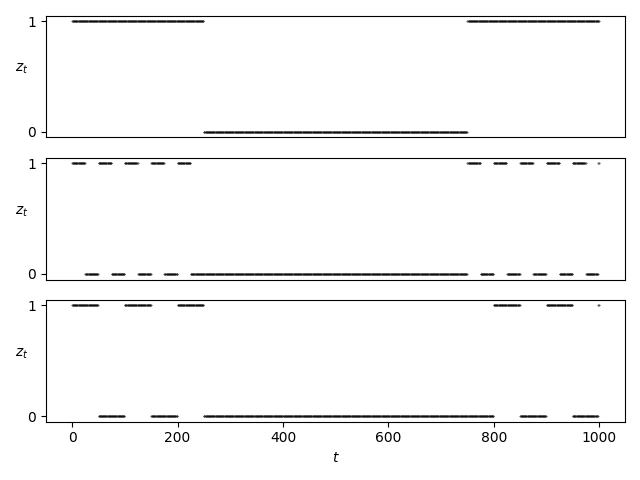}
    \caption{Data collection schemes considered in the numerical experiments. From top to bottom, the three panels illustrate an unscheduled gap and an unscheduled gap with short and long scheduled gaps, respectively.}
    \label{fig:missing-data-schemes}
\end{figure}

\begin{figure}[ht]
    \centering
    \includegraphics[width=1.0\textwidth]{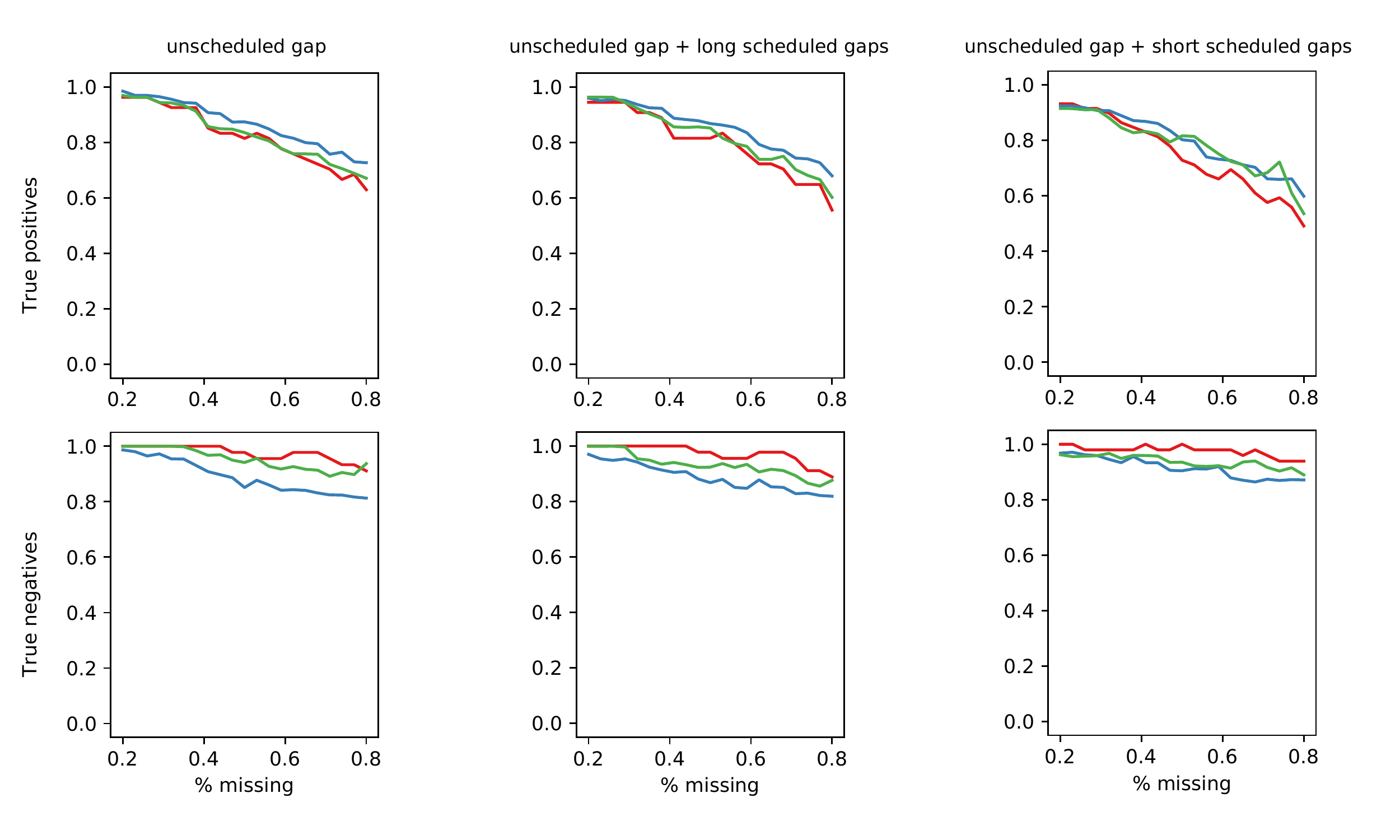}
    \caption{Mean probability of (not) passing through the hot-spot as a function of the percentage of the data which were missing and the type of the collection mechanism. Linear interpolation is marked in red (${\color{red}\textbf{-}}$), adjusted parametric method in blue (${\color{blue}\textbf{-}}$) and unadjusted nonoparametric in green (${\color{green}\textbf{-}}$)}
    \label{fig:simulation-tp-fp}
\end{figure}

We start by generating motions $\{\mathcal{m}_n\}_{n=1}^{N_{\text{tr}}}$ using the same parameter settings as in Section \ref{subsec:simulations-parameters}. Next for each realization of the motion we calculate the ``center of mass'' of its trajectory $\tau(\mathcal{m}_n)$ as $\bc = \frac{1}{t_{\max}}\sum_{\mathcal{D}(\mathcal{m}_n)} \bs^{n}_t$ and we shift all elements of $\tau(\mathcal{m}_n)$ by $-\bc$ which means that all simulated trajectories occupy roughly the same area. Examples of motions simulated in this way are shown in the supplementary materials. We then find the smallest bounding box $\mathcal{B} = [b^x_0, b^x_1] \times [b^y_0, b^y_1]$ such that locations from all trajectories $\tau(\mathcal{m}_1), \dots, \tau(\mathcal{m}_{N_{\text{tr}}})$ are within $\mathcal{B}$.

\begin{figure}[ht]
    \centering
    \includegraphics[width=1.0\textwidth, trim={0 0 0 1.3cm},clip]{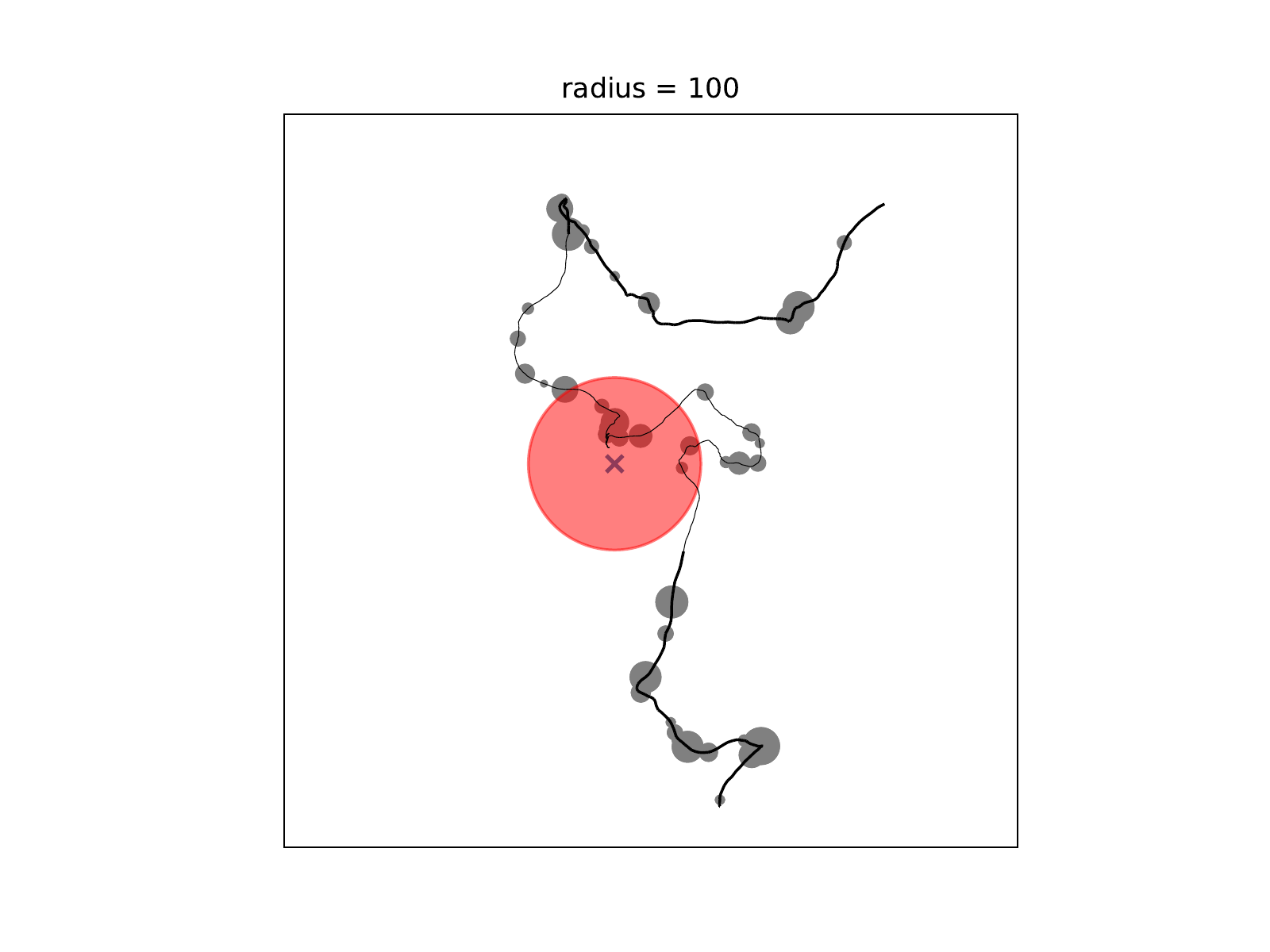}
    \caption{A sample simulated trajectory with a hotspot ($\boldsymbol{\times}$) and the exposure area of radius 100 marked in red. Missing data mechanism is of type ``unscheduled gap'' and the unobserved part of the realization of the motion is marked with a thinner line. The grey circles represent pauses with the center of the circle correponding to the location of the pause and its radius being proportional to its duration.}
    \label{fig:hotspot-example}
\end{figure}

In order to compare all the methods, motivated by the considerations presented in Section \ref{subsec:imputation-motivation} we focus on estimating exposure. To this end, we start by generating \emph{exposure hot-spots} as  $h_n = (h^x_n, h^y_n)$, where for $i \in \{x, y\}$ we have $h^i_n \sim \normal\left(0, \left(\frac{b^i_1 - b^i_0}{3}\right)^2\right)$ for $n = 1, \dots, N_{\text{tr}}$, and $\normal(\mu, \sigma^2)$ denotes a normal distribution with mean $\mu$ and variance $\sigma^2$. This way of selecting the hot-spot is intended to ensure that each trajectory is likely but not guaranteed to pass through the neighborhood of the hot-spot. We call this neighborhood  the \emph{exposure area} and define it as a ball $B(h_n, 100)$. Figure \ref{fig:hotspot-example} illustrates the concepts of a hot-spot and exposure area.
For each trajectory $\tau(\mathcal{m}_n)$, data collection scheme $\zeta$, hot-spot $h_n$ and missing percentage $\alpha$, we impute $N_\text{imp}=50$ times the missing portion of the trajectory using method $\nu$.

\paragraph{Exposure probability}

We start by evaluating the probability of passing through the exposure area. Let $W = \{n \in \{1, 2, \dots, N_{\text{tr}}\} : \tau(\mathcal{m}_n) \,\cap\, B(h_n, 100) \neq \emptyset\}$ be the set of of indices of trajectories which pass through the exposure area $B(h_n, 100)$ and let $V = \{1, 2, \dots, N_{\text{tr}}\} \setminus W$ be the indices of the remaining trajectories. If $n \in W$ then for each missing percentage $\alpha$, method $\nu$ and data collection scheme $\zeta$ we calculate $w_n^{\alpha, \nu, \zeta}$, the fraction of the trajectories imputed using $\nu$ that also pass through the danger zone (true positive). If $\tau(\mathcal{m}_n) \in V$ then calculate $v_n^{\alpha, \nu, \zeta}$, the fraction of the curves imputed using a given method that also do not pass through the exposure area (true negative).

We then compare average true positive $\bar{w}^{\alpha, \nu, \zeta} = \sum_{n \in W} w_n^{\alpha, \nu, \zeta}$ and true negative rates $\bar{v}^{\alpha, \nu, \zeta} = \sum_{n \in V} v_n^{\alpha, \nu, \zeta}$. A method is better if a given rate is higher. 

Each column in Figure \ref{fig:simulation-tp-fp} shows $\bar{w}^{\rho, \nu, \zeta}$ and $\bar{v}^{\rho, \nu, \zeta}$, while plots with $w_n^{\alpha, \nu, \zeta}$ and $v_n^{\alpha, \nu, \zeta}$ can be found in the supplementary material. 

Looking at the true positive averages we see that the method proposed in Section \ref{sec:imputation} outperforms the other two methods when there is no unscheduled missingness and when the scheduled gaps are large and is slightly better when the scheduled gaps are small. Note that shorter breaks are inherently easier to impute as the missing parts of the trajectory are more similar to a straight line and consequently all methods produce similar results. This suggests that a method which is better at imputing longer breaks should be preferred. In this case our method requires between 10 and 20 percentage point fewer observed locations than the other methods to achieve a comparable true positive rate.

The higher true positive rate exhibited by the adjusted parametric method is related to its lower true negative rate. In particular, the imputations generated using the adjusted parametric method explore the space more than the imputations generated using the other two methods (Section S1 in the supplement contains relevant examples). A more careful and application-dependent selection of the parametrization might to a  help to reduce the true negative rate while increasing the true positive rate. Moreover in certain applications one rate might be more important by the other.

Finally, it is important to note, that our simulations demonstrate that sampling schemes relying on short observation intervals degrade performance, as measured by our ``hot-spot metric". The true positive rates are roughly similar when there is no unscheduled missingness and when the scheduled observation sequences are longer. However when they are shorter, even if frequent, then the performance of the all the methods suffer.

\paragraph{Length of exposure}

A related evaluation relies on calculating the total time spent in the exposure area according to each imputation and comparing it with the amount of time spent in the area by the true trajectory.  

Consider trajectory $\tau(\mathcal{m}_n)$ and the corresponding imputation imputation $\omega_n^{\iota, \alpha, \nu, \zeta}$, where in addition to symbols used before we use $\iota=1, \dots, N_{\text{imp}}$ as the index of the imputation. Moreover, we consider only $\alpha\in \{0.2, 0.5, 0.8\}$. We define the true exposure time $e_n$ as the number of time periods during which the individual performing movement $\mathcal{m}_n$ is inside the exposure area, i.e.
$$
e_n = \#\{\mathcal{S}_t \in \tau(\mathcal{m}_n): |\mathcal{S}_t - h_n|^2_2 < 100\}.
$$
Similarly for each $\omega_n^{\iota, \alpha, \nu, \zeta}$ we write $e_n^{\iota, \alpha, \nu, \zeta} = \#\{\mathcal{S}_t \in \omega_n: |\mathcal{S}_t - h_n|^2_2 < 100\}$. We then calculate $d_n^{\iota, \alpha, \nu, \zeta} = e_n - e_n^{\iota, \alpha, \nu, \zeta}$, the difference between the true length of exposure and the length resulting from the imputed trajectory. For each imputation method and missing percentage we also calculate the grand mean of these differences as 
$$
\bar{d}^{\alpha, \nu, \zeta} = \frac{1}{N_\text{imp}N_\text{tr}}\sum_{n=1}^{N_\text{tr}}\sum_{\iota=1}^{N_\text{imp}}d_n^{\iota, \alpha, \nu, \zeta}.
$$
Figure \ref{fig:exposure-time} shows the distribution of $d_n^{\iota, \alpha, \nu, \zeta}$ for the "unscheduled gap" data collection scheme and the corresponding means. Analogous plots for the other two data collection schemes look similar and can be found in the supplementary material. 

\begin{figure}[ht]
  \centering
  \begin{subfigure}{0.48\textwidth}
    \includegraphics[width=1.0\textwidth]{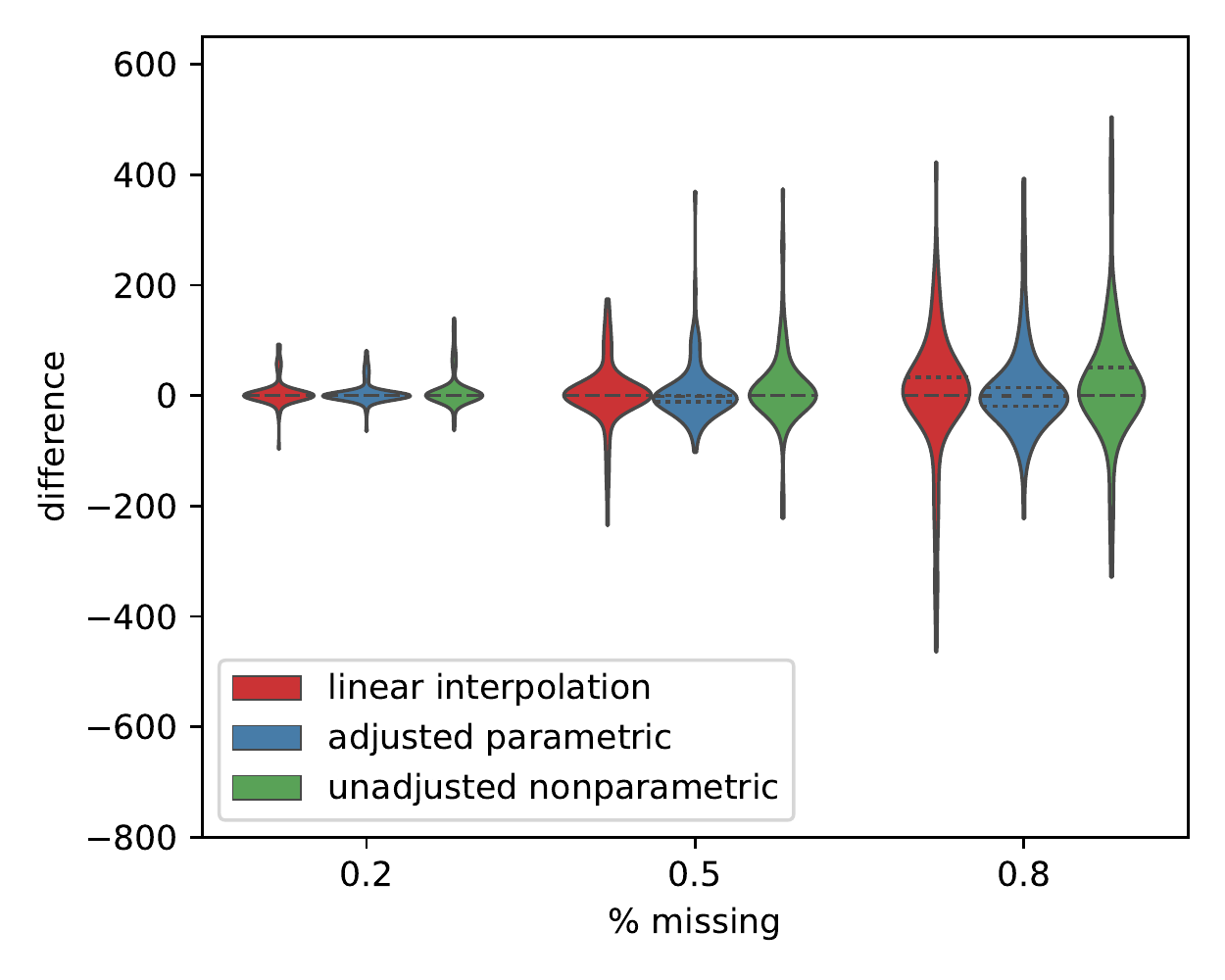}
    \caption{Unscheduled gap}
    \label{fig:exposure-just-hotspot}
  \end{subfigure}
  \begin{subfigure}{0.48\textwidth}
    \includegraphics[width=1.0\textwidth]{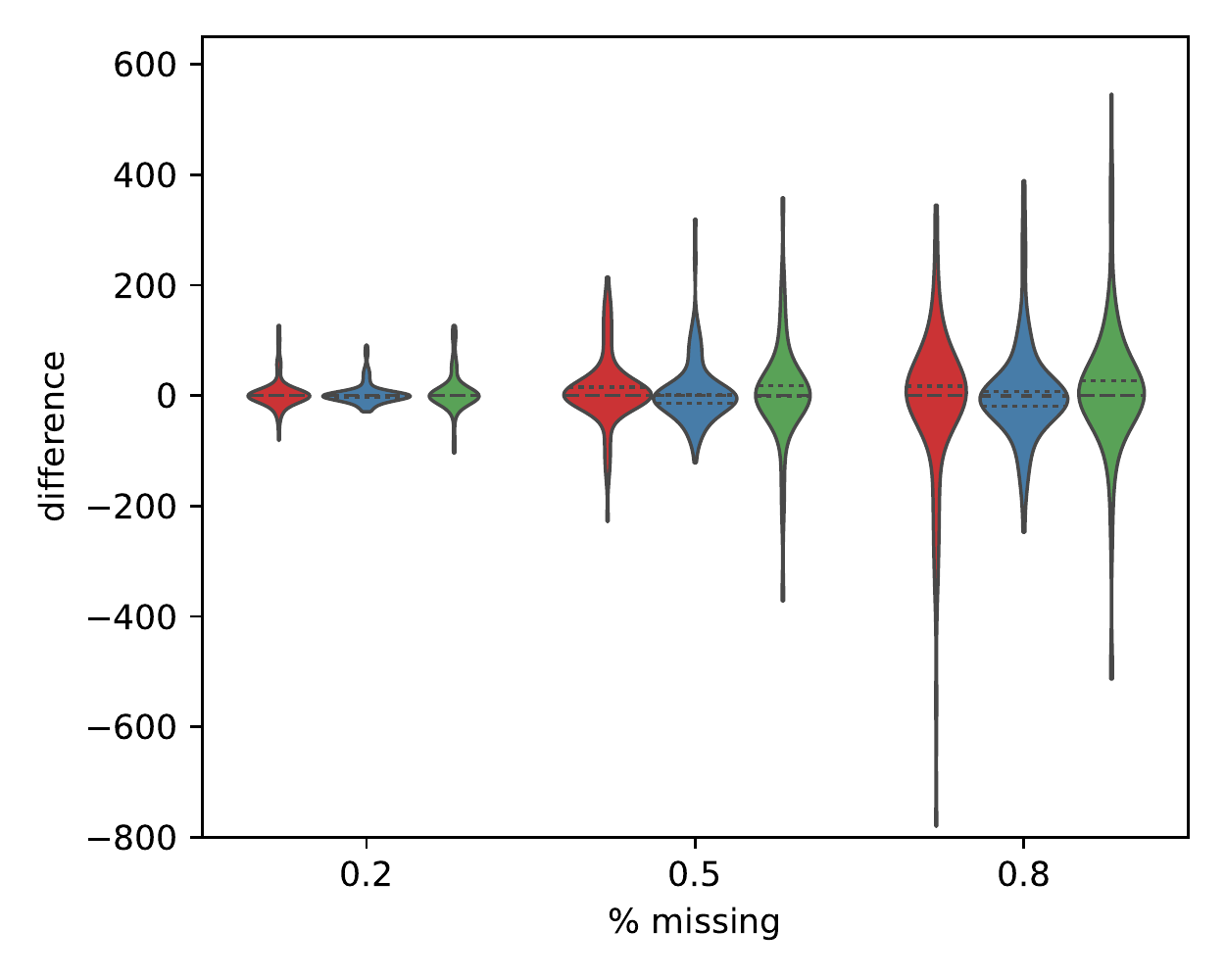}
    \caption{Unscheduled gap + long gaps}
    \label{fig:exposure-hotspot-long-obs}
  \end{subfigure}
  \begin{subfigure}{0.48\textwidth}
    \includegraphics[width=1.0\textwidth]{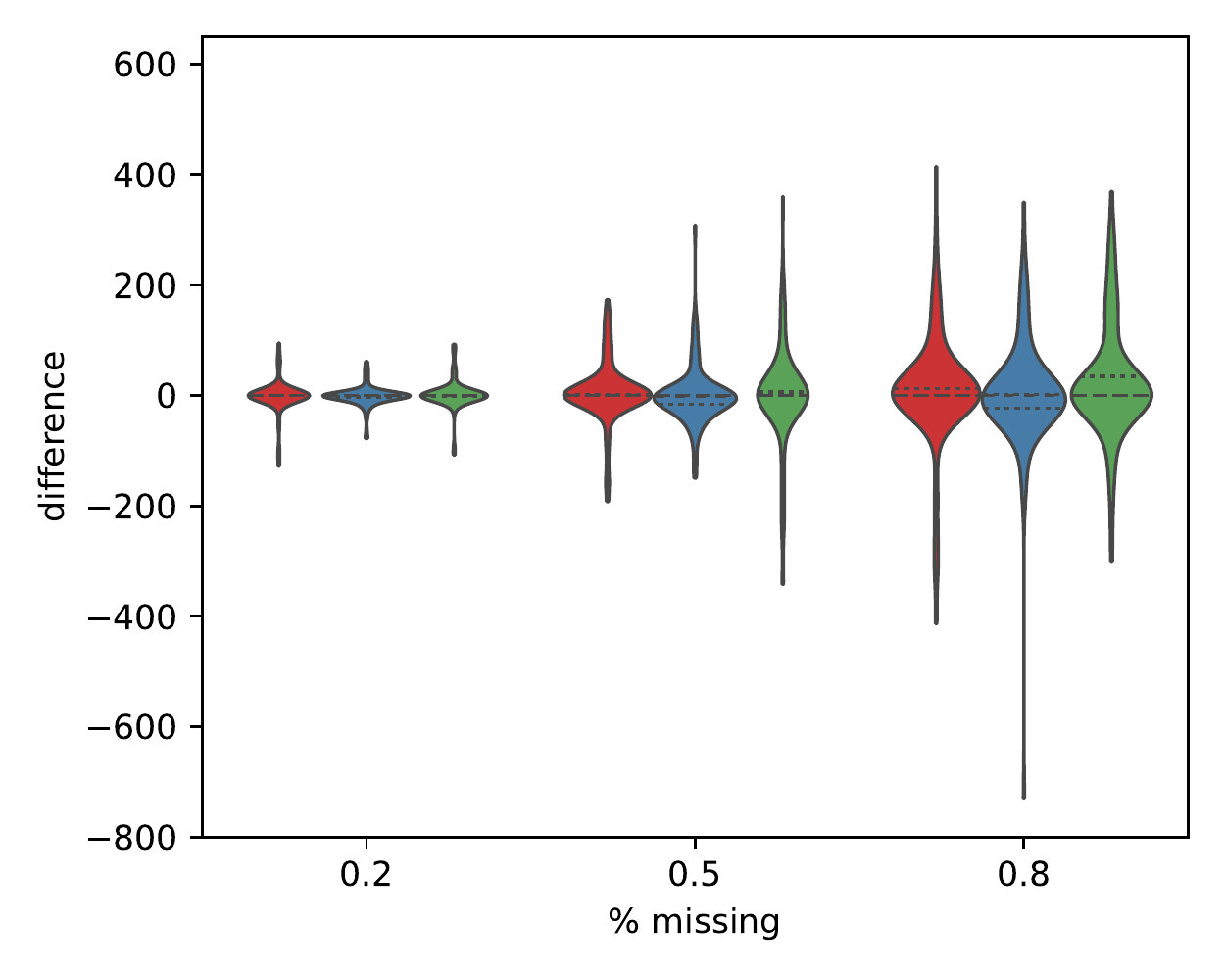}
    \caption{Unscheduled gap + short gaps}
    \label{fig:param-hotspot-short-obs}
  \end{subfigure}
  \caption{Distribution of differences between the true length of exposure and the length resulting from the imputed trajectory. The dashed line in the middle of the violin plot denotes the median while the dotted lines stand for the other two quartiles. In many cases some of the quartiles are almost identical.}
  \label{fig:exposure-time}
\end{figure}

Overall the adjusted parametric method we propose in our paper is somewhat better than the other two when there is no unscheduled missingness and when scheduled the gaps are long. This can be seen by the generally narrower spread of the distribution of differences. At the same time the distribution for all methods tend to have similar quartiles.

\subsection{Illustrative Application to Disney World Data}\label{subsec:data-application}

In this section we present an application of the framework developed in this paper to the analysis of real data. We use a collection of 41 trajectories (observed locations) corresponding to 19 individuals visiting the Disney World, near Orlando, FL \citep{ncsu-mobilitymodels-20090723}, some of whom came to the park several times.  The trajectories are made up of locations sampled every 30 seconds using a dedicated handheld GPS device. For illustration, we consider every trajectory (and the motion that it corresponds to) to be independent of all the others and governed by a unique set of parameters. Every location in the trajectory is represented using the $(x,y)$ coordinates which express the distance in meters from some predetermined reference point. 
Figure \ref{fig:application-trajectories} shows two examples of motions corresponding to the observed trajectories. Most of them consist of between 1000 and 2000 locations. There are no missing data.

We first compare the accuracy of parameter estimates using the parametric method with and without the missing data adjustment. To this end, we assume the standard parametrization and begin by estimating the parameters of the model using the entire motion $\mathcal{m}_n$ for $n=1,\dots, 41$. We call these estimates $\hat{\bftheta}^{n,t}$, representing the value of the MLE for the complete data. Next, we remove some locations following the ``unscheduled gap + short scheduled gaps" mechanism, described in Section \ref{subsec:simulations-imputation}, and assuming that $\alpha = 0.5$. We then estimate parameters $\hat{\bftheta}^{n,b}$ assuming that increments derived from the observed locations are missing at random, which results in estimates that are biased relative to the estimates based on the entire observed trajectory. As the last step we use the form of observed data likelihood given in Proposition \ref{thm:observed-data-likelihood} which accounts for the data collection mechanism to obtain estimates of $\hat{\bftheta}^{n,u}$.
Then, for motion $\mathcal{m}_n$, we calculate $D^{n,j}$, the relative difference between the estimate obtained using the entire motion $\mathcal{m}_n$ and the estimate obtained using this motion with missing (masked) increments. Mathematically,
$$
D_i^{n,j} = \frac{\hat{\theta}_i^{n,j} - \hat{\theta}_i^{n,t}}{\hat{\theta}_i^{n,t}},
$$
where $j \in \{b, u\}$ and $i = 1, 2, 3, 4$ indicates the component of $\hat{\bftheta}^{n, j}$. The histograms of these relative differences are shown in Figure \ref{fig:application-params}.

Our results show that without the adjustment for the data collection mechanism, the parameters often deviate significantly (even in relative terms) from the estimates obtained using the entire motion. 

Next, using the methods in Section \ref{sec:simulations} and masking the locations to mimic the short scheduled gaps mechanism we calculate the probability of each of the individuals passing through a randomly selected hot-spot. Since the total number of trajectories is fairly low, unlike in Section \ref{sec:simulations}, we generated 20 hot-spots for each trajectory and calculated the probability of passing through each one of them. The results are reported in Figure \ref{fig:application-results} using the same format as in Section \ref{sec:simulations}. 
Similar to the results obtained in Section \ref{subsec:simulations-imputation}, here we also see that the adjusted parametric method results in fewer false negatives than the other two as captured by the hot-spot metric described in Section \ref{sec:simulations}.
\begin{figure}[ht]
    \centering
    \begin{subfigure}[b]{0.68\textwidth}
         \includegraphics[width=\textwidth]{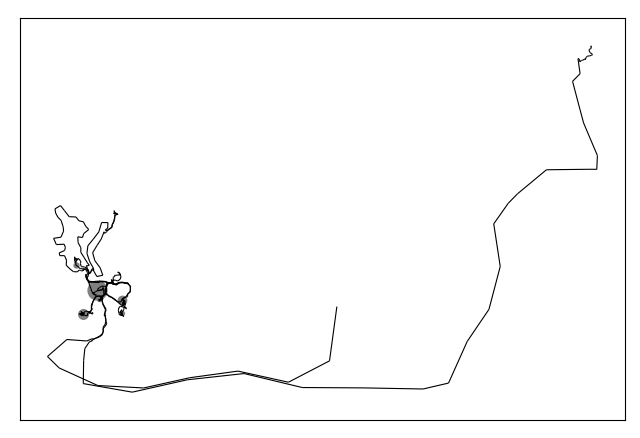}
    \end{subfigure}
    \hfill 
    \begin{subfigure}[b]{0.29\textwidth}
         \includegraphics[width=\textwidth]{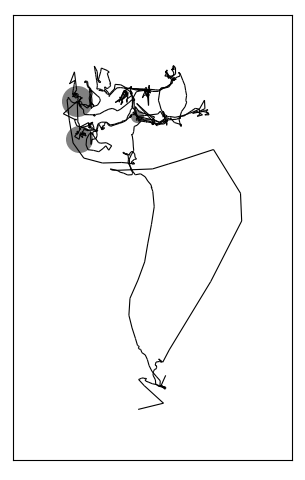}
    \end{subfigure}
    \caption{Sample trajectories from the analyzed data set.}
    \label{fig:application-trajectories}
\end{figure}

\begin{figure}[ht]
    \centering
    \includegraphics[width=1.0\textwidth]{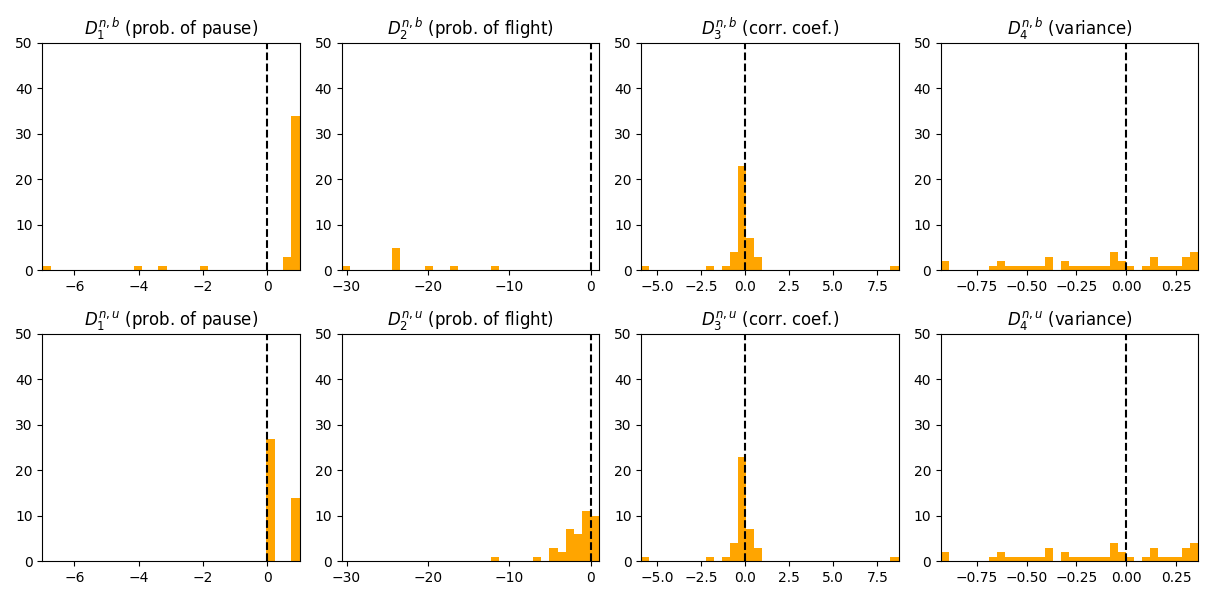}
    \caption{Histograms or relative differences in parameter estimates. The first row shows the differences for the estimates obtained under the MAR assumption while the second row shows the differences for the estimates obtained after adjusting for the data collection mechanism. The differences in parameters related to flights do not change between the rows because, like in Section \ref{sec:simulations}, the direction and length of the flights are in fact missing at random. The histogram of $D_2^{n,b}$ has only a few values because for trajectories in which there are no observed pauses this parameter cannot be estimated.}
    \label{fig:application-params}
\end{figure}

\begin{figure}[ht]
    \centering
    \includegraphics[width=1.0\textwidth]{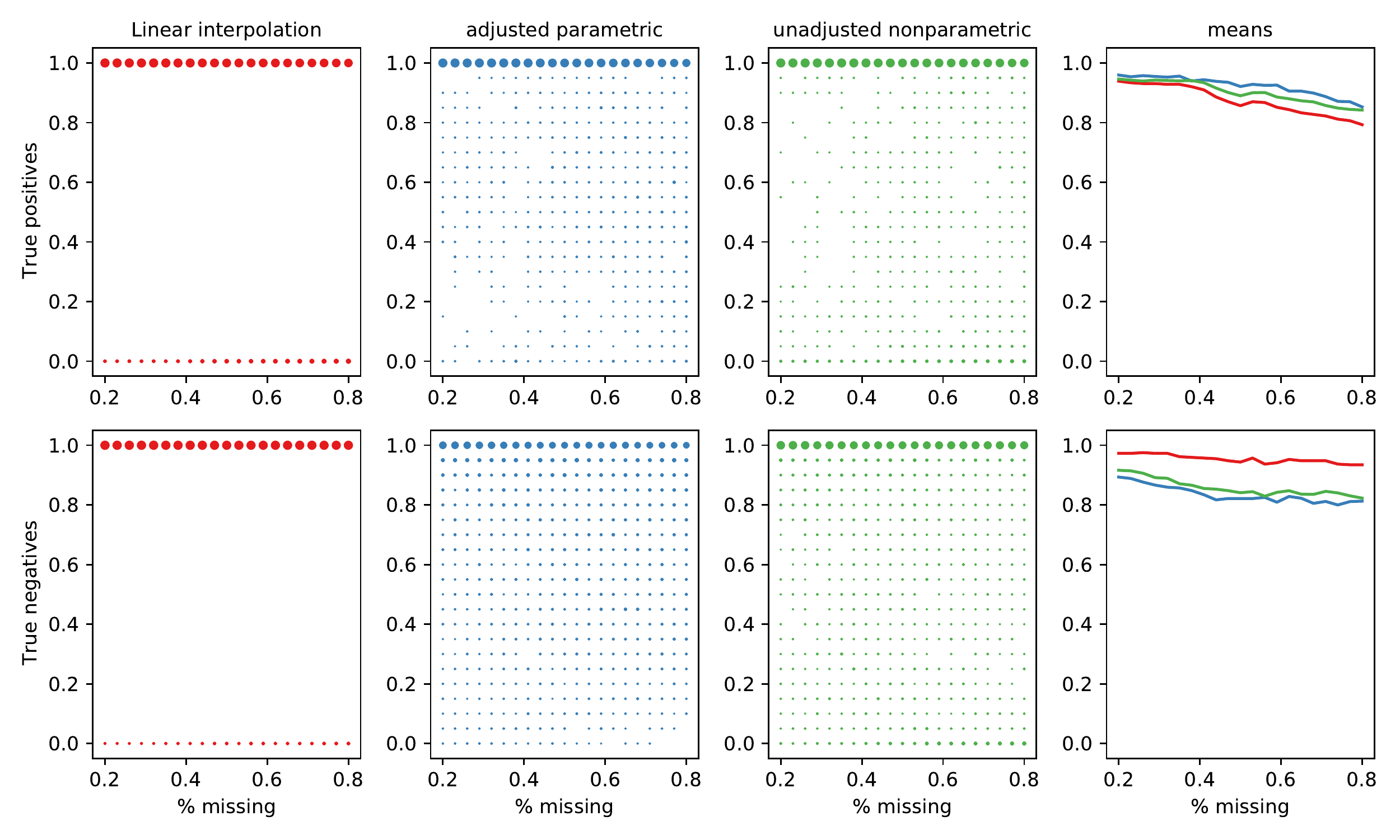}
    \caption{Probability of (not) passing through the hot-spot as a function of the length of the data which were missing outside of the schedule.}
    \label{fig:application-results}
\end{figure}

\section{Key takeaways and discussion}\label{sec:discussion}

In this paper, we provide a rigorous statistical formulation of the flight-pause model to represent individual human mobility. We derived the mathematical representation of the FPM by formalizing the notion of increments, decomposing their joint likelihood and by providing an explicit connection to observed locations. Our model builds on previous work describing mobility as a series of flights and pauses but, unlike previous work in this field, our approach leads to a likelihood function. Thus it can serve as a basis for various extensions to statistical inference, including in particular formal links to notions of data collection mechanisms, assumptions about them, and possible missing data adjustments.  We have shown how our formulation can lead to both methods for inference and movement imputation that can improve upon biases exhibited from previous approaches that have operated, possibly erroneously, under the assumption that increments making up data on a motion were missing at random.

The insights gained from the formulation of motions made up of increments and increments' relationship to locations introduce several implications for the design of studies using MPT.  For example, when the goal is to infer features of movement with the FPM, researchers using this framework should design measurement schemes in terms of their implications for increments and not, as is commonly done, in terms of their impact on locations. Some of these implications can be measured using the concept of the effective sample size, which we introduced. All of the above considerations must ultimately be balanced against practical restrictions (e.g., relating to device battery), but we hope that the ideas presented in this paper provide valuable tools to account for these restrictions in a way compatible with the goals of a particular investigation.

Our approach is not without limitations. First and foremost, expressing the motion as a sequence of increments introduces much complexity in the analysis of missing data, on which we elaborated extensively in Section 4. A more basic model, for example, one which assumes that the locations and not increments form a Markov chain, might make this analysis simpler. At the same time, our approach is motivated by the paradigm used in several prominent works in the area of human mobility and it also allows for explicit modeling of the duration of pauses - a simple Markov chain cannot do that.

Another limitation of our model lies in its inability to make use of some data if it is collected in a way radically different than the data collection schemes observed here. More specifically, observed locations separated from other observed locations in time, cannot be classified as increments and are thus not used in the construction of the FPM likelihood or in the proposed trajectory interpolation. For example, an isolated location observed in the middle of a large gap would not be used for parameter inference, and corresponding trajectory interpolations from the model are not guaranteed to pass through that location. Extensions to incorporate such ``singleton" observations may be possible, for example, through additional restrictions to the likelihood or as auxiliary information used to interpolate trajectories.

We also acknowledge the simplicity of the standard parameterization and note that an alternative formulation would be needed before the model could be useful in analyzing complex MPT data for a specific purpose. Moreover, while we assumed that the locations are observed without error, this is typically not the case. We thus leave for future work the consideration of the implications of inaccurate observations. Furthermore, we see a potential for improvement in modeling the motion of several individuals jointly (such as in \citet{scharf2016dynamic} or \citet{milner2021modelling}) and accounting for their possible interactions. We view these imperfections of our method as promising directions for future research.

Finally, it would be interesting to explore the connections between our work and the moving-resting process (MR). It is possible that a certain parametrization of our model could be considered a discretized approximation of the MR model. Similarly, there are clear connections between our FPM and the semi-Markov model specified on locations in \citet{langrock2012flexible}, and it would be worthwhile to explore connections between these approaches, particularly as they relate to inference under missing data in common human mobility study designs.


\footnotesize

\appendix
\section*{Acknowledgments and Funding}
This study was supported in part by the Eunice Kennedy Shriver National Institute on Child Health and Human Development (Catherine A. Calder, R01HD088545; Elizabeth Gershoff, The University of Texas at Austin Population Research Center, P2CHD-042849) and by the National Institutes of Health grant NIH R01ES026217. We would also like to thank Justin Drake, Raymond Wang and Nathan Wikle for helpful discussions and comments. Special thanks to Giovanni Rebaudo for his valuable comments on the technical aspects of the paper.

\FloatBarrier

\section*{Conflicts of interests}
All authors declare that they have no conflicts of interest.

\section*{Data availability}
Data sharing not applicable to this article as no datasets were generated during the current study

\section{Derivation of the state-space model}\label{app:state-space-model}

In Section \ref{sec:imputation} we used sampling from the smoothing distribution (FFBS) of the direction of flights to impute a missing part of the trajectory. Our approach can be detailed as follows. Consider $k_l$ to be a sequence of indices such that $R_{k_l} = f$. Then we have 
$$\label{eq:delta-as-function-of-L}
\Delta^S_{k_l} = L^S_{k_{l+1}} - L^S_{k_l}.
$$
Under standard parametrization this means that 
$$
\Delta^S_{k_l} = \theta_3 \Delta^S_{k_{l-1}} + \bfepsilon_l,
$$
where $\epsilon_l \sim \mathcal{N}(0, \theta_4\bI)$. Because of \eqref{eq:delta-as-function-of-L} we can also write it as
$$
L^S_{k_{l+1}} - L^S_{k_l} = \theta_3(L^S_{k_l} - L^S_{k_{l-1}}) + \bfepsilon_l.
$$
Thus if we now define $\bx_l = \left[L^S_{k_l}, \Delta^S_{k_{l-1}}\right]$, we can write that 
$$
\bx_{k_l} = \bA(\bftheta)\bx_{k_{l-1}} + \bv_{k_l},
$$
where $\bv_{k_l} \sim \normal(\bfzero, \bQ(\bftheta))$ with
\begin{equation*}
    \bA(\bftheta) = \left[ \begin{array}{cccc} 
                    1 & \theta_1 & 0 & 0 \\
                    0 & 0 & 1 & \theta_1 \\
                    0 & \theta_1 & 0 & 0 \\
                    0 & 0 & 0 & \theta_1
                \end{array}\right] 
    \quad\quad \text{and} \quad \quad
    \bQ(\bftheta) = \theta_2\left[ \begin{array}{cccc} 
                    1 & 1 & 0 & 0 \\
                    0 & 0 & 1 & 1 \\
                    1 & 1 & 0 & 0 \\ 
                    0 & 0 & 1 & 1
                \end{array}\right].
\end{equation*}

\section{Inference under the standard parametrization}

In this section show how some of the general results derived earlier in the paper can be made specific to the case of standard parametrization. In addition, throughout this Section we adopt
\begin{assumption}\label{ass:indep-nablas}
The direction and distance of the first flight in an observed block $O_j$ do not depend on the direction and distance of the last increment in the preceding block, $U_{j-1}$. Formally, if $k = I_j$ then
$$
p\left(\Delta^S_{I_j}|\bM_{I_j - 1}, R_k = f, \bftheta\right) = p\left(\Delta^S_{I_j}|R_k = f, \bftheta\right).
$$
\end{assumption}
Similar though less restrictive than Assumption \ref{ass:indep-blocks}, this assumption can be a good approximation of the truth if the blocks of unobserved increments are long but might lead to stronger bias when most of them are very short.

\subsection{Observed data likelihood}\label{app:inference-obs-data}

In this Section we substitute the parametric families from Section \ref{subsec:standard-param} into expression \eqref{eq:obs-data-integral}. We define the following sets of indices:
\begin{align*}
P_j&= \{k : \bM_k \in \mathcal{P}\cap O_j\} ,\quad & F_j &= \{k : \bM_k \in \mathcal{F} \cap O_j \setminus \{\bM_{I_j}\}\}, & F^f_j &= \{k : \bM_k \in \mathcal{F}^f \cap O_j \setminus \{\bM_{I_j}\}\}, \\
\tilde{P}_j &= \{k : \bM_k \in \mathcal{P}\cap U_{j-1}\} ,\quad & \tilde{F}_j &= \{k : \bM_k \in \mathcal{F} \cap \left(\{\bM_{I_j}\} \cup U_{j-1}\right)\}, & \tilde{F}^f_j &= \{k : \bM_k \in \mathcal{F}^f \cap \left(\{\bM_{I_j}\} \cup U_{j-1}\}\right).\\
\end{align*}
Using this notation \eqref{eq:obs-data-integral} can be transformed into
\begin{multline}\label{eq:obs-data-integral-standard}
    q(\mathcal{O}|\bfxi, \bftheta) = 
    \prod_{j=1}^J \frac{\theta_1^{|P_j|} (1-\theta_1)^{|F^f_j|}
    \cdot \theta_2^{|P_j|}  \left(1-\theta_2\right)^{\sum_{k \in P_j}\Delta^T_k}}{\left(2\pi\theta_4\right)^{\frac{|F_j|}{2}}} \cdot \exp\left(-\frac{1}{2\theta_4}\sum_{k\in F_j} \left(\Delta^S_k - \theta_3\Delta^S_{k - 1}\right)^2\right) \cdot \\
    \cdot q(\bM_{I_1}| \bftheta) \cdot \bigintss  q(\bfxi|\mathcal{O}, \mathcal{U}, \bfpsi) \prod_{j=2}^{J-1} \frac{\theta_1^{|\tilde{P}_j|} (1-\theta_1)^{|\tilde{F}^f_j|} \cdot \theta_2^{|\tilde{P}_j|} \left(1-\theta_2\right)^{\sum_{k \in \tilde{P}_j}\Delta^T_k}}{{\left(2\pi\theta_4\right)^{\frac{|\tilde{F}_j|}{2}}}} \cdot \\
    \cdot \exp\left(-\frac{1}{2\theta_4}\sum_{k \in \tilde{F_j}} \left(\Delta^S_k - \theta_3\Delta^S_{k-1}\right)^2\right) d\mathcal{U}.
\end{multline}
If we use $I(\bfxi, \mathcal{O}, \bfpsi, \bftheta)$ to denote the integral, then the logarithm of the observed data likelihood takes the form
\begin{multline}\label{eq:obs-data-loglik-std}
    \log q(\mathcal{O}|\bfxi, \bftheta) = \sum_{j=1}^J |P_j|\log \theta_1 + |F^f_j| \log (1-\theta_1) + |P_j| \log \theta_2 + \sum_{k \in P_j} (\Delta^T_k - 1) \log(1-\theta_2) - \frac{F_j}{2}\log (2\pi\theta_4) - \\
    - \frac{1}{2\theta_4}\sum_{k \in F_j} \left(\Delta^S_k - \theta_3\Delta^S_{k-1}\right)^2 + \log q(\bM_{I_1}|\bftheta) + \\
    + \log \bigintss  q(\bfxi|\mathcal{O}, \mathcal{U}, \bfpsi) \prod_{j=2}^{J-1} \frac{\theta_1^{|\tilde{P}_j|} (1-\theta_1)^{|\tilde{F}^f_j|} \cdot \theta_2^{|\tilde{P}_j|} \left(1-\theta_2\right)^{\sum_{k \in \tilde{P}_j}\Delta^T_k}}{{\left(2\pi\theta_4\right)^{\frac{|\tilde{F}_j|}{2}}}} 
    \cdot \exp\left(-\frac{1}{2\theta_4}\sum_{k \in \tilde{F_j}} \left(\Delta^S_k - \theta_3\Delta^S_{k-1}\right)^2\right) d\mathcal{U}
\end{multline}

\subsection{Simplifications for select MNAR mechanisms}\label{app:inference-mnar}

Recall that in Section \ref{subsec:observed-data-likelihood} we observed that that observed increments can be partitioned into blocks $O_1, \dots, O_j$. In a similar manner we now partition the elements of the motion trajectory $\tau(\mathcal{M})$ into blocks of observed and unobserved locations, which we denote with $\tilde{O}_j$ and $\tilde{U}_j$, respectively. Unlike in the case of increments, however, for a block of observations to be declared "observed" it has to contain at least one pair of locations whose spatial coordinates are not the same. We also note that $\tilde{O}_j \supset \tau(O_j)$ and $\tilde{U}_j \subset \tau(U_j)$.

We use $\tilde{T}_j$ to stand for the time of the first location in block $\tilde{O}_j$ and let $\tilde{N}_j$ be its length and define the following variables
\begin{align*}
    d_j &= \underbrace{(\tilde{T}_j + \tilde{N}_j - 1)}_{\text{time of the last observation in } \tilde{O}_j} - \underbrace{L^T_{I_j + N_j - 1} + \Delta^T_{I_j + N_j - 1}}_{\text{the last time point in } \mathcal{D}(O_j)} \\
    g_j &= \underbrace{L^T_{I_j}}_{\text{first time point in }\tilde{O}_j} - \underbrace{\tilde{T}_j}_{\text{first time point in }\mathcal{D}(O_j)}
\end{align*}
Intuitively, $d_j$ is the number of locations following $O_j$ which are observed but do not consitute an observed increment and $g_j$ is the number of locations which directly precede $O_j$ but also do not make up an observed increment. Note that when the last two locations in $\tilde{I}_j$ are different, i.e. when $\mathcal{S}_{\tilde{T}_j + \tilde{N}_j - 1} \neq \mathcal{S}_{\tilde{T}_j + \tilde{N}_j - 2}$ then $d_j = 0$. Analogously, $g_j = 0$ if the first two elements in $\tilde{I}_j$ are different, i.e., when $\mathcal{S}_{\tilde{T}_j} \neq \mathcal{S}_{\tilde{T}_j + 1}$. To distinguish between these situations we define $\delta_j = \mathbbm{1}(d_j = 0)$ and $\gamma_j = \mathbbm{1}(g_j = 0)$.
We also need to define
\begin{align*}
    j(t) &= \max \left\{ j : \tilde{T}_j \leq t\right\} \\
    B_t &= \tilde{T}_{j(t)}\\
    B_t' &= \tilde{T}_{j(t)} + \tilde{N}_{j(t)}.
\end{align*}
In words, $j(t)$ is a mapping that for each time $t$ assigns the index of the most recent block of observed location, while $B_t$ and $B_t'$ denote the time of the first location in, correspondingly, the blocks of observed and unobserved locations.
We can then provide a tractable expression for the log-likelihood for the data collection scheme described in Example \ref{ex:onoff}.

\begin{proof}[Proof of Proposition \ref{thm:observed-data-likelihood}]
Let us start by rearranging the terms in \eqref{eq:log-obs-data-standard}. We have
\begin{multline}
    \log q(\mathcal{O}|\bftheta) = \sum_{j=1}^J|P_j| \log\theta_1 + |F_j| \log(1-\theta_1) + |P_j| \log\theta_2 + \sum_{k \in P_j}(\Delta^T_k  - 1 )\log\left(1-\theta_2\right) - \frac{|\mathcal{F}\cap O_j| - 1}{2}\log \left(2\pi\theta_4\right) \\
    - \sum_{j=1}^J \frac{1}{2\theta_4} \sum_{k = I_j + 1} \left(\Delta^S_k - \theta_3\Delta^S_{k-1}\right)^2 + \log p\left(\Delta^S_{I_1}|\bftheta\right) + \\
    + \sum_{j=1}^J \log q\left(\Delta^S_{I_j}|\bftheta\right) + \log\theta_1 \sum_{j=1}^{J-1} \delta_j +  \log\theta_2\sum_{j=1}^{J-1}\gamma_j + \sum_{j=1}^{J-1} \log P_{\delta_j+1, \gamma_{j+1}+ 1}(N_j) + \log(1-\theta_2) \sum_{j=1}^{J-1}d_j + g_{j+1}.
\end{multline}
We can see that the first two lines of the expression above match the first two lines of \eqref{eq:obs-data-loglik-std}. It remains to show that the last line above is equal to the last line of \eqref{eq:obs-data-loglik-std}.

First note that $q(\bfxi|\mathcal{M}, \bfpsi) = \prod_{k=1}^K q(\xi_k|\bM_k, \bfpsi)$ because whether a particular increment is observed depends only on at what time point it starts and ends. More specifically 
\begin{equation}\label{eq:p_xi_m_given_m}
    q(\xi_k|\bM_k) = \begin{cases}
      1, & \begin{array}{cc}
           \xi_k = 1, & \left(L^T_k < B_{L^T_k}'\right) \land \left(\left(\left(\Delta^T_k < B_{L^T_k}' - L^T_k\right) \land R_k = p\right) \lor \left(R_k=f\right)\right)\\
           \xi_k = 0, & \text{otherwise} 
      \end{array}\\
      0, & \text{otherwise}
    \end{cases}.
\end{equation}
Using indicator notation we can also write it as
\begin{multline}\label{eq:inc-obs-indic}
q(\xi_k|\bM_k) = \mathbbm{1}(\xi_k=1)\mathbbm{1}\left(B_{L^T_k}' > L^T_k\right)\left( \mathbbm{1}(R_k=f) + \mathbbm{1}(R_k=p)\mathbbm{1}\left(\Delta_k^T < B_{L^T_k}' - L^T_k\right)\right) + \\
+\mathbbm{1}(\xi_k=0)\left(\mathbbm{1}(L^T_k \geq B_{L^T_k}') + \mathbbm{1}(L^T_k < B_{L^T_k}')\mathbbm{1}(R_k=p)\mathbbm{1}(\Delta^T_k \geq B_{L^T_k}' - L^T_k)\right). 
\end{multline}

This observation allows us to decompose the integral as
\begin{multline}
     \sum_{j=1}^{J-1} \log \int \prod_{k = I_j + N_j}^{I_{j+1}} q(\xi_k|\bM_k, \bfpsi) \cdot \frac{\theta_1^{|\tilde{P}_j|} (1-\theta_1)^{|\tilde{F}^f_j|} \cdot \theta_2^{|\tilde{P}_j|} \left(1-\theta_2\right)^{\sum_{k \in \tilde{P}_j}\Delta^T_k}}{{\left(2\pi\theta_4\right)^{\frac{|\tilde{F}_j|}{2}}}} \cdot \\
    \cdot \exp\left(-\frac{1}{2\theta_4}\sum_{k \in \tilde{F}_j} \left(\Delta^S_k - \theta_3\Delta^S_{k-1}\right)^2\right) dU_j = (\bullet).
\end{multline}
Next, using Assumption \ref{ass:indep-nablas} we can further simplify $(\bullet)$ into
\begin{multline}
    (\bullet) = \sum_{j=2}^J \log q\left(\Delta^S_{I_j}\right) + \\
    + \sum_{j=1}^{J-1} \log \int \prod_{k = I_j + N_j}^{I_{j+1}} q(\xi_k|\bM_k, \bfpsi) \cdot \underbrace{\theta_1^{|\tilde{P}_j|} (1-\theta_1)^{|\tilde{F}^f_j|} \cdot \theta_2^{|\tilde{P}_j|} \left(1-\theta_2\right)^{\sum_{k \in \tilde{P}_j}\Delta^T_k}}_{(\star\star)} d\{(R_k, \Delta^T_k : \bM_k \in U_j\} = (\bullet\bullet).
\end{multline}
Using $\mathcal{I}_j$ to denote the integral under the summation we can write 
$$(\bullet\bullet) = \sum_{j=2}^J \log q\left(\Delta^S_{I_j}\right) + \sum_{j=1}^{J-1} \log \mathcal{I}_j.$$
Next, if we define $q_t = \mathbbm{1}(t \in \mathcal{D}(\mathcal{P}))$, then it turns out that that $\{q_t, t\in\mathbb{N}\}$ is a Markov chain with transition matrix 
$$
\mathcal{Q} = \left[\begin{array}{cc}1-\theta_1 & \theta_1 \\ \theta_2 & 1 - \theta_2\end{array}\right].
$$
For a two-state Markov chain there exist explicit formulas which allow us to calculate the $n-$step transition matrix, i.e. the $n$-th power of the $Q$ matrix. Specifically, it can be shown by induction that 
\begin{equation}\label{eq:Qn-def}
Q(n) = \mathcal{Q}^n = \frac{1}{\theta_1 + \theta_2}\left(\left[\begin{array}{cc}\theta_2 & \theta_1 \\ \theta_2 & \theta_1\end{array}\right] + (1 - \theta_1 - \theta_2)^n \left[\begin{array}{cc}\theta_1 & -\theta_1 \\ -\theta_2 & \theta_2\end{array}\right]\right).
\end{equation}
Now notice that $\prod_{k = I_j + N_j}^{I_{j+1}} q(\xi_k|\bM_k, \bfpsi) = q(\xi_{\bM_{I_{j+1}}} = 1|\bM_{I_{j+1}}, \bfpsi)\prod_{k = I_j + N_j}^{I_{j+1} - 1} q(\xi_k=0|\bM_k, \bfpsi)$. 
Combining this with \eqref{eq:inc-obs-indic} and \eqref{eq:Qn-def} we have that $\mathcal{I}_j = \theta_1^{\delta_j}(1-\theta_2)^{d_j} Q_{\delta_j, \gamma_{j+1}}(|U_j|)(1-\theta_2)^{g_{j+1}}\theta_2^{\gamma_{j+1}}$. Taking logs of this expression completes the proof. 
\end{proof}

\section{Other Proofs}

\subsection{Results from Section \ref{sec:model}}

\begin{proof}[Proof of Proposition \ref{thm:complete-data-likelihood}]
Let us start with the joint probability distribution of the motion, parametrized by $\bftheta$ which takes the form
$$
q(\mathcal{M}|\bftheta) = q(\bM_1|\bftheta) \prod_{k = 1}^K q(\bM_k|\bM_{k-1}, \dots, \bM_1, \bftheta) = q(\bM_1|\bftheta) \prod_{k = 2}^K q(\bM_k|\bM_{k-1}, \bftheta).
$$
where the last equality is due to Assumption \ref{ass:Markov}. Focusing on the individual term under the product sign and using the definitions introduced at the beginning of Section \ref{sec:model} we can express it as
$$
q(\bM_k|\bM_{k-1}, \bftheta) = q(L_k, \Delta_k, R_k|\bM_{k-1}, \bftheta) = q(L_k, \Delta_k||\bM_{k-1}, R_k, \bftheta)q(R_k|\bM_{k-1}, \bftheta).
$$
If we group together the terms $q(R_k|\bM_{k-1}, \bftheta)$ for all $k$s, it remains to be proven that 
$$
\prod_{R_k = f, k>1} \cdot q(\Delta^T_k|R_k, \bM_{k-1}, \bftheta) \cdot \prod_{R_k = p, k>1} q(\Delta^S_k|R_k = p, \bM_{k-1}, \bftheta) = \prod_{k=2}^K q(L_k, \Delta_k||\bM_{k-1}, R_k, \bftheta)
$$
Using the definition of $\Delta_k$ and $L_k$ we rewrite $q(L_k, \Delta_k|\bM_k, R_k, \bftheta)$ as
\begin{multline}
q(L^T_k, L^S_k, \Delta^S_k, \Delta^T_k|\bM_k, R_k, \bftheta) = \\ \mathbbm{1}(R_k = f)\mathbbm{1}(\Delta^T_k =  1)\mathbbm{1}(L^T_k = L^T_{k-1} + \Delta^T_{k-1}) \cdot \\ 
\cdot \left( \mathbbm{1}(L^S_k = L^S_{k-1} + \Delta^S_{k-1})\mathbbm{1}(R_{k-1}=f) + \mathbbm{1}(L^S_k = L^S_{k-1})\mathbbm{1}(R_{k-1}=p)\right) \cdot q(\Delta^S_k|\bM_{k-1}, R_k, \bftheta) + \\
+ \mathbbm{1}(R_k = p)\mathbbm{1}(L^S_k = L^S_{k-1} + \Delta^S_{k-1})\mathbbm{1}(\Delta^S_k=\Delta^S_{k-1}) \cdot q(\Delta^T_k|\bM_{k-1}, R_k, \bftheta)
\end{multline}
The indicator functions \eqref{ass:continuity-in-time}-\eqref{ass:continuity-in-space-pauses}, \eqref{ass:flights-have-length-1} and \ref{ass:DeltaS}. In the last line we also used Assumption \ref{ass:no-cons-pauses}. Omitting indicator functions for clarity of notation we finish the proof.
\end{proof}

\begin{proof}[Proof of Proposition \ref{thm:standard-param-likelihood}]
We use specifications \ref{spec:four-dim}-\ref{spec:gaussian-increments} to substitute for relevant components of \eqref{eq:complete-data-likelihood}. 

First, using specification \ref{spec:gaussian-increments} we see that $\prod_{R_k = p, k>1} q(\Delta^S_k|R_k = p, \bM_{k-1}, \bftheta)$ becomes 
$$
\prod_{k > 1} \prod_{i=1, 2} \normal((\Delta^S_k)_i; \theta_3(\Delta^S_{k-1})_i, \theta_4).
$$

Next, if we define $y_t = \mathbbm{1}(t \in \mathcal{D}(\mathcal{P}))$, then it turns out that that $\{y_t, t\in\mathbb{N}\}$ is a Markov chain with transition matrix 
$$
\mathcal{Q} = \left[\begin{array}{cc}1-\theta_1 & \theta_1 \\ \theta_2 & 1 - \theta_2\end{array}\right],
$$
because of specification \ref{spec:markov-in-types}, \ref{spec:constant-pause-prob} and \ref{spec:geom}. This means that we can equivalently represent 
$$
\prod_{R_k = f, k>1} q(\Delta^T_k|R_k, \bM_{k-1}, \bftheta)
$$
as
$$
 \theta_1^{|\mathcal{P}|} (1-\theta_1)^{|\mathcal{F}^f|} \cdot \theta_2^{|\mathcal{P}|}  \left(1-\theta_2\right)^{\sum_{k : R_k=p}\Delta^T_k}.
$$
\end{proof}

\subsection{Results from Section \ref{sec:data-model}}

In the remainder of this section we will make use of the following definitions. 
\begin{definition}\label{def:anchor-locations}
Let $\{\bs_t\}_{t \in \mathbb{N}}$ be a realization of trajectory $\{\mathcal{S}_t\}_{t \in \mathbb{N}}$ and let $\bz = [z_1, z_2, \dots]$ be the realization of the observability vector $\bZ$.
    \begin{enumerate}
        \item  We say that $\bs_t$ is an \emph{anchor location} if $\bs_{t-1} \neq \bs_t$ or $\bs_t \neq \bs_{t+1}$.
        \item We say that $\bs_t$ is an \emph{observed anchor location} if $z_t = 1$ and $\mathbbm{1}(\bs_{t-1} \neq \bs_t) z_{t-1} = 1$ or $\mathbbm{1}(\bs_{t+1} \neq \bs_t) z_{t+1} = 1$.
        \item We say that two observed anchor locations $\bs_t, \bs_r$ with $t<r$ are \emph{consecutive} if all locations between them are observed, i.e. if $z_{t+1} \cdot z_{t+2}\dots z_{r-2} \cdot z_{r-1} = 1$.
    \end{enumerate}
\end{definition}

Intuitively, and as we should explain in more detail below, anchor locations are those locations which allow us to identify increments. We are now ready to write the

An observed anchor location is a location about which we know that it is anchor based on the observations. Contrast that with the situation in which $\bs_{t-1} = \bs_t \neq \bs_{t+1}$ and $z_{t-1}z_t(1 - z_{t+1}) = 1$, i.e. we don't observe $z_{t+1}$. In such case $\bs_t$ would be an observed location (because $z_t = 1$ and it would be an anchor location because $\bs_{t-1}\neq\bs_t$ but it would not be an observed anchor location.

\begin{fact}
$L^S_k$, the original location of the increment $\bM_k$, is an anchor location with probability 1.
\end{fact}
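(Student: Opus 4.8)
The plan is to reduce the statement to the almost-sure non-degeneracy of flight displacements. Writing $t = L^T_k$, the start location $L^S_k$ is exactly the coordinate $\mathcal{S}_t$ in the trajectory $(\mathcal{S}_t)_{t\in\mathbb{N}}$ whose existence is guaranteed by Proposition \ref{prop:unique-trajectory}, and being an anchor (Definition \ref{def:anchor-locations}) amounts to showing that $\mathcal{S}_{t-1}\neq\mathcal{S}_t$ or $\mathcal{S}_t\neq\mathcal{S}_{t+1}$ holds with probability one. The structural input I would use first is that, by \eqref{ass:no-cons-pauses} together with the convention $R_1=f$, every pause is immediately preceded by a flight, so each increment is either itself a flight or a pause sitting right after a flight.

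First I would treat the flight case $R_k=f$. Continuity in space for flights \eqref{ass:continuity-in-space-flights} gives $\mathcal{S}_{t+1}=L^S_{k+1}=L^S_k+\Delta^S_k=\mathcal{S}_t+\Delta^S_k$, so it suffices that $\Delta^S_k\neq 0$. Since the flight-displacement law is atomless at the origin (it is Gaussian with $\sigma=\theta_4>0$ under specification \ref{spec:gaussian-increments}, and more generally any continuous choice of $q(\Delta^S_k\mid R_k=f,\bM_{k-1})$ suffices), we have $\mathbb{P}(\Delta^S_k=0)=0$ and hence $\mathcal{S}_{t+1}\neq\mathcal{S}_t$ almost surely. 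For the pause case $R_k=p$ I would instead look backwards: the preceding increment $\bM_{k-1}$ is a flight, so \eqref{ass:continuity-in-space-flights} yields $\mathcal{S}_t=L^S_k=L^S_{k-1}+\Delta^S_{k-1}=\mathcal{S}_{t-1}+\Delta^S_{k-1}$, and the same atomlessness argument applied to $\Delta^S_{k-1}$ forces $\mathcal{S}_t\neq\mathcal{S}_{t-1}$ almost surely. A union bound over the finitely many increments then upgrades these pointwise a.s. statements to the claim for every $k$ simultaneously.

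I expect the main obstacle to be the boundary increments, where one of the neighbors $\mathcal{S}_{t\pm1}$ may fail to exist. The first increment is a flight and retains a forward neighbor, so it is handled by the flight case; the delicate configuration is a terminal increment that is a flight immediately following a pause, since then the forward neighbor is absent while \eqref{ass:continuity-in-space-pauses} makes the backward neighbor coincide with $\mathcal{S}_t$. I would address this either by adopting the convention that the recorded trajectory includes the endpoint of a terminal flight (restoring a distinct forward neighbor) or by noting that this is an isolated boundary effect that does not disturb the interior increments, for which the argument above applies verbatim. The only genuinely probabilistic ingredient is the atomlessness of the flight-displacement distribution, and isolating that cleanly is what makes the remaining steps purely deterministic bookkeeping.
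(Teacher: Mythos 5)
Your proof is correct and follows essentially the same route as the paper's own argument: a case split on $R_k$, handling flights by looking forward via \eqref{ass:continuity-in-space-flights} and pauses by looking backward at the preceding flight guaranteed by \eqref{ass:no-cons-pauses}, with the almost-sure claim resting on the flight displacement being nonzero with probability one. If anything you are more careful than the paper, which neither makes explicit the required atomlessness at zero of $q(\Delta^S_k \mid R_k = f, \bM_{k-1})$ (it simply asserts the inequalities hold almost surely) nor addresses the boundary configuration you flag --- a terminal flight immediately following a pause, where $\mathcal{S}_{t+1}$ does not exist and the claim fails without an additional convention.
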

\begin{proof}
If $\bM_k$ is a pause then $\bM_{k-1}$ is a flight. Therefore by Assumption \ref{ass:continuity-in-space-flights} $L^S_{k} \neq L^S_{k-1}$. Moreover, by Assumption flights last only one unit of time which means that $\mathcal{S}_{L^T_k} = L^S_{k} \neq L^S{k-1} = \mathcal{S}_{L^T_{k-1}}  = \mathcal{S}_{L^T_k - 1}$. Therefore $\mathcal{S}_{L^T_k}$ is an anchor location. 
If $\bM_k$ is a flight then $\mathcal{S}_{L^T_{k}} = L^S_k \neq L^S_{k-1} = \mathcal{S}_{L^T_k + 1}$. Therefore $\bs_{l^T_k}$, the realization of $\mathcal{S}_{L^T_k}$, is an anchor location.
All (in)equalities hold almost surely.
\end{proof}

\begin{fact}
Every anchor location is the original location for the realization of some increment.
\end{fact}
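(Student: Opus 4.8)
The plan is to argue by contradiction, exploiting the fact that every time index belongs to exactly one increment. First I would fix a realization $\{\bs_t\}$ and an anchor location $\bs_t$. By the concatenation structure of $\tau(\mathcal{M})$ and the uniqueness guaranteed by Proposition \ref{prop:unique-trajectory}, there is a unique increment $\bM_k$ with $t \in \mathcal{D}(\bM_k)$, i.e. $L^T_k \le t \le L^T_k + \Delta^T_k - 1$. The goal is to show that $t$ must equal $L^T_k$, so that $\bs_t = L^S_k$ is the original location of $\bM_k$. If $t = L^T_k$ we are immediately done, so the whole content of the argument is to rule out $t > L^T_k$.

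The key step is to show that $t > L^T_k$ is incompatible with $\bs_t$ being an anchor location. Since flights last one unit of time by \eqref{ass:flights-have-length-1}, any increment with $|\mathcal{D}(\bM_k)| > 1$ must be a pause, so $t > L^T_k$ forces $R_k = p$ and $\Delta^T_k \ge 2$. I would then establish $\bs_{t-1} = \bs_t = \bs_{t+1}$, which contradicts the defining property of an anchor location from Definition \ref{def:anchor-locations}. For the predecessor, $t - 1 \ge L^T_k$ still lies in $\mathcal{D}(\bM_k)$, so the trajectory definition yields $\bs_{t-1} = L^S_k = \bs_t$. For the successor there are two sub-cases: if $t + 1 \le L^T_k + \Delta^T_k - 1$ then $t+1 \in \mathcal{D}(\bM_k)$ and again $\bs_{t+1} = L^S_k = \bs_t$; otherwise $t$ is the final index of the pause and $t + 1 = L^T_{k+1}$ begins the next increment.

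That last sub-case is the main (though mild) obstacle, since one must confirm that the location does not jump at the pause boundary. Here I would invoke continuity in space for pauses \eqref{ass:continuity-in-space-pauses}: because $R_k = p$, we have $L^S_{k+1} = L^S_k$, hence $\bs_{t+1} = L^S_{k+1} = L^S_k = \bs_t$. (Assumption \eqref{ass:no-cons-pauses} additionally tells us $\bM_{k+1}$ is a flight, but beyond knowing where $\bM_{k+1}$ begins this is not needed.) Combining the predecessor and successor identities gives $\bs_{t-1} = \bs_t = \bs_{t+1}$, so $\bs_t$ would fail to be an anchor location, a contradiction. Therefore $t = L^T_k$, and $\bs_t$ is the original location of the realization of $\bM_k$, as claimed; all equalities hold almost surely, matching the qualification in the companion Fact.
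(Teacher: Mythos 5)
Your proof is correct, but it takes a different route from the paper's. The paper argues directly: whichever disjunct of Definition \ref{def:anchor-locations} holds, the location change between consecutive times is witnessed by a flight (starting at $t-1$ in case (1), at $t$ in case (2)), and continuity then identifies $\bs_t$ as the origin of the increment following that flight, respectively of the flight itself. You argue by contradiction instead: you place $t$ inside its unique increment $\bM_k$, note that $t > L^T_k$ forces $R_k = p$ by \eqref{ass:flights-have-length-1}, and then show $\bs_{t-1} = \bs_t = \bs_{t+1}$, killing the anchor property, with the pause-boundary sub-case handled via \eqref{ass:continuity-in-space-pauses}. Your version is in one respect tighter: the paper's case (1) silently assumes that a change of location between $t-1$ and $t$ can only be produced by a flight, which is exactly the possibility of a jump at a pause boundary that you rule out explicitly. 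The paper's version is constructive --- it exhibits which increment's origin $\bs_t$ is, which is what downstream results such as Proposition \ref{prop:obs-inc-obs-anchors} actually use --- and both versions share the same minor blind spot at the final time index of the motion, where $\bs_{t+1}$ is undefined. One caution: you should not cite Proposition \ref{prop:unique-trajectory} for the claim that $t$ lies in a unique $\mathcal{D}(\bM_k)$. In the paper's logical ordering that proposition is itself proved from this Fact (via Fact \ref{fct:bijection}), so the citation is potentially circular; fortunately the partition of the time axis by the sets $\mathcal{D}(\bM_k)$ follows directly from \eqref{ass:continuity-in-time} and the definition of $\tau(\mathcal{M})$, so your argument survives once that reference is replaced.
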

\begin{proof}
Let $(\mathcal{m}_1, \mathcal{m}_2, \dots, \mathcal{m}_K)$ to be the realization of motion $\mathcal{M}$ and $\{\bs\}_{t \in \mathbb{N}}$ be the realization of its corresponding trajectory. Consider an anchor location $\bs_{t_k}$. From Definition \ref{def:anchor-locations} we know that either (1) $\bs_{t_k - 1} \neq \bs_{t_k}$ or (2) $\bs_{t_k + 1} \neq \bs_{t_k}$. In the case (1) this means that for some $l$ we have a flight $m_l = (t_k - 1, \bs_{t_k - 1}, 1, \bs_{t_k} - \bs_{t_k - 1}, f)$. Therefore $\bs_{t_k}=L^S_{l+1}$. In the case (2) we know that for some $l'$ we similarly have $\mathcal{m}_{l'} = (t_k, \bs_{t_k}, 1, \bs_{t_k} - \bs_{t_k +1}, f)$. This ends the proof.
\end{proof}

The consequence of these two results is 
\begin{fact}\label{fct:bijection}
The two fact above imply that there exists a bijection between the set of all anchor locations and realization of increments.
\end{fact}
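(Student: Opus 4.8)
The plan is to make the correspondence between increments and anchor locations explicit and then verify the three properties defining a bijection — well-definedness, surjectivity, and injectivity — drawing on the two preceding facts together with the strict ordering of start times. Define the map $\Phi$ that sends each increment $\bM_k$ to its original location $L^S_k$, regarded as the element $\mathcal{S}_{L^T_k}$ of the time-indexed trajectory. The first of the two facts above shows directly that $\Phi$ is well defined as a map into the set of anchor locations, since $L^S_k$ is an anchor location almost surely. The second fact supplies surjectivity: its proof exhibits, for every anchor location $\bs_t$, an increment whose original location is exactly $\bs_t$, so the image of $\Phi$ is all of the anchor locations.

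What remains, and what I expect to be the one genuinely new step, is injectivity. The essential point here is interpretive: an anchor location must be treated as an element of the trajectory carrying its time stamp $t$, not as a bare point of $\mathbb{R}^2$, because a trajectory may revisit the same spatial coordinate at different times and injectivity can only be meaningful at the level of time-indexed locations. The proof of the second fact respects this reading, since it recovers each anchor location $\bs_t$ together with its index $t$. Granting it, if $\Phi(\bM_k) = \Phi(\bM_{k'})$ then the two increments share a start time, $L^T_k = L^T_{k'}$.

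I would then close the argument by invoking continuity in time \eqref{ass:continuity-in-time}, which gives $L^T_k = L^T_{k-1} + \Delta^T_{k-1}$ with every duration $\Delta^T_{k-1} \geq 1$ (flights last one unit and pauses at least one). Hence $(L^T_k)_k$ is strictly increasing, the start times are pairwise distinct, and $L^T_k = L^T_{k'}$ forces $k = k'$, so $\Phi$ is injective. Combining the three properties yields the claimed bijection between realizations of increments and anchor locations. The main obstacle is therefore conceptual rather than computational: once one recognizes that anchor locations carry their time index, the strict monotonicity of start times immediately rules out collisions and completes the proof.
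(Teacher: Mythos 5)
Your proof is correct and follows the same route the paper takes implicitly: the paper states this fact without a written proof, treating it as an immediate consequence of the two preceding facts, where Fact 1 gives well-definedness of the map $\bM_k \mapsto \mathcal{S}_{L^T_k}$ and Fact 2 gives surjectivity. Your injectivity argument — reading anchor locations as time-indexed and invoking \eqref{ass:continuity-in-time} with all durations at least $1$ to get strictly increasing start times — correctly supplies the one step the paper leaves unstated.
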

Thus we will use $\mathcal{A} = (\bs_{t_1}, \bs_{t_2}, \dots, \bs_{t_k}, \dots, \bs_{t_K})$ to denote the set of all anchor locations. We can now present the

\begin{proof}[Proof of Proposition \ref{prop:unique-trajectory}]
By Fact \ref{fct:bijection} shown above, there exists a 1-1 mapping between each sequence of anchor locations and realization of a trajectory because the locations which are not anchor locations can be inferred. In particular a location $\bs_t$ which is known to not be an anchor location has to be the same as the most recent anchor location $\bs_r, r<t$. This also means that every set of anchor locations uniquely identifies the realization of a motion.
\end{proof}

\begin{proposition}\label{prop:obs-inc-obs-anchors}
    If the only information available is the realization of the motion trajectory $\bs_1, \bs_2, \dots$, then the value $m_k$ of increment $\bM_k$ is observed if and only if (1) $R_k=p$ while $\bs_{t_k}, \bs_{t_{k+1}}$ and $\bs_{t_{k-1}}$ are two pairs of consecutive observed anchor locations or if (2) $R_k=f$ and $\bs_{t_k}$ and $\bs_{t_{k+1}}$ are a pair of consecutive anchor locations.
\end{proposition}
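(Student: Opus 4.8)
The plan is to reduce both the defining condition for $\xi_k=1$ given in Proposition \ref{prop:obs-increments} and the anchor-location condition in the statement to one and the same requirement on the realized observability indicators $z_t$ over an explicit window of time points, and then read off the equivalence. The bridge between the two descriptions is Fact \ref{fct:bijection}: each increment $\bM_k$ corresponds to the anchor location $\bs_{t_k}=L^S_k$ sitting at its starting time $t_k=L^T_k$, so the locations $\bs_{t_{k-1}},\bs_{t_k},\bs_{t_{k+1}}$ appearing in the statement are exactly the origins of $\bM_{k-1},\bM_k,\bM_{k+1}$.

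First I would assemble the structural facts that the model assumptions force on neighbouring entries of the realized trajectory, since these are what make the \emph{observed}-anchor clauses of Definition \ref{def:anchor-locations} collapse to concrete statements about individual $z_t$'s. For a flight $\bM_k$, assumption \eqref{ass:flights-have-length-1} gives $t_{k+1}=t_k+1$ and continuity \eqref{ass:continuity-in-space-flights} gives $\bs_{t_k}\neq\bs_{t_k+1}$ (almost surely, the flight displacement being nonzero). For a pause $\bM_k$, assumption \eqref{ass:no-cons-pauses} forces $\bM_{k-1}$ and $\bM_{k+1}$ to be flights, so $t_{k-1}=t_k-1$ and $t_{k+1}=t_k+\Delta^T_k$; continuity \eqref{ass:continuity-in-space-pauses} makes the whole run $\bs_{t_k}=\bs_{t_k+1}=\dots=\bs_{t_{k+1}}$ equal (note the successor flight's origin $\bs_{t_{k+1}}=L^S_{k+1}=L^S_k$ lies at the same point), while $\bs_{t_k-1}=L^S_{k-1}\neq\bs_{t_k}$ and $\bs_{t_{k+1}+1}=L^S_{k+2}\neq\bs_{t_{k+1}}$ mark the two ends of the run.

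With these in hand, the flight case is short: since $t_{k+1}=t_k+1$, the pair $\bs_{t_k},\bs_{t_{k+1}}$ is automatically consecutive, and because $\bs_{t_k}\neq\bs_{t_k+1}$ each of the two points is an observed anchor via its neighbour inside the pair, while conversely each requires its own $z$ to equal one; the two clauses together are thus equivalent to $z_{t_k}=z_{t_k+1}=1$, which is exactly the flight condition $\prod_{t\in\mathcal{D}(\bM_k)\cup\floor{\mathcal{D}(\bM_{k+1})}}Z_t=1$ of Proposition \ref{prop:obs-increments}. For the pause case I would argue the two directions symmetrically. Because the run of equal locations forces the forward clause to fail at $\bs_{t_k}$ and the backward clause to fail at $\bs_{t_{k+1}}$, being an observed anchor at $\bs_{t_k}$ reduces to $z_{t_k-1}=z_{t_k}=1$ and at $\bs_{t_{k+1}}$ to $z_{t_{k+1}}=z_{t_{k+1}+1}=1$, while $\bs_{t_{k-1}}$ being an observed anchor contributes $z_{t_k-1}=1$ and consecutiveness of the pair $\bs_{t_k},\bs_{t_{k+1}}$ contributes $z_t=1$ for every interior pause time $t_k<t<t_{k+1}$. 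Collecting these gives $z_t=1$ for all $t\in\{t_k-1,\dots,t_{k+1}+1\}$, which is precisely the pause condition over $\mathcal{D}(\bM_k)\cup\mathcal{D}(\bM_{k+1})\cup\ceil{\mathcal{D}(\bM_{k-1})}\cup\floor{\mathcal{D}(\bM_{k+2})}$ in Proposition \ref{prop:obs-increments}; reversing the implications recovers the anchor description.

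The step I expect to be the main obstacle is the downstream boundary of a pause. Because the origin of the succeeding flight $\bM_{k+1}$ sits at the \emph{same} spatial point as the pause, one cannot certify that the stationary run has ended — and hence cannot pin down the pause duration $\Delta^T_k$ — until the location actually moves at time $t_{k+1}+1$. This is exactly why recognizing $\bs_{t_{k+1}}$ as an \emph{observed} anchor requires $z_{t_{k+1}+1}=1$, and it is the reconciliation of this extra observed time point with the term $\floor{\mathcal{D}(\bM_{k+2})}$ in Proposition \ref{prop:obs-increments} that must be handled with care; the remainder is bookkeeping over the window of indices.
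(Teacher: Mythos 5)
Your bookkeeping relating the two indicator conditions is accurate, but the argument as a whole has its logical dependencies backwards and therefore does not prove this proposition. In the paper, Proposition \ref{prop:obs-increments} is not the definition of $\xi_k$: the indicator $\xi_k$ is introduced informally ($\xi_k=1$ if increment $\bM_k$ is observed), Proposition \ref{prop:obs-inc-obs-anchors} is proved first and from first principles, and Proposition \ref{prop:obs-increments} is then derived \emph{from} it --- its proof in the appendix begins ``Using Proposition \ref{prop:obs-inc-obs-anchors} we need to show that these conditions are equivalent to observing two consecutive anchor locations.'' By taking Proposition \ref{prop:obs-increments} as the ``defining condition for $\xi_k=1$'' and reducing both it and the anchor-location condition to the same requirements on the $z_t$, you have in effect written a (correct) version of the paper's proof of Proposition \ref{prop:obs-increments}, not a proof of Proposition \ref{prop:obs-inc-obs-anchors}; used as a proof of the latter, the argument is circular within the paper's logical structure.

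The missing substance is the informational content of the statement: ``the value $m_k$ of increment $\bM_k$ is observed'' means that the tuple $(t_k,\bs_{t_k},\Delta^T_k,\Delta^S_k,R_k)$ is \emph{determined} by the available data, and a proof must show both that the anchor conditions suffice to reconstruct it and that they are necessary. The paper's ``if'' direction exhibits an explicit reconstruction map $f(\bs_{t_{k-1}},\bs_{t_k},\bs_{t_{k+1}})$: the type is read off from whether $\bs_{t_k}=\bs_{t_{k+1}}$, the duration from $t_{k+1}-t_k$, and --- crucially --- for a pause the displacement component equals $\bs_{t_k}-\bs_{t_{k-1}}$, because by assumption \eqref{ass:DeltaS} a pause stores the preceding flight's displacement. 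That is the actual reason the extra anchor $\bs_{t_{k-1}}$ enters in the pause case; in your write-up the corresponding requirement $z_{t_k-1}=1$ falls out of Definition \ref{def:anchor-locations} bookkeeping, with no argument that without it the value of $\Delta^S_k$ is undetermined (the ``only if'' direction). To repair the proposal you would need to prove Proposition \ref{prop:obs-increments} independently of the present proposition, which would force you to supply exactly this reconstruction/necessity argument --- at which point the indicator bookkeeping serves as the bridge between the two propositions, as it does in the paper, rather than as the proof itself.
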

\begin{proof}
The ``if" part can be shown by constructing a function $f$ which maps anchor locations to $m_k$. This function can be written as
\begin{multline*}
f(\bs_{t_{k-1}}, \bs_{t_k}, \bs_{t_{k+1}}) = 
\left( t_k, 
\bs_{t_k}, \right. \\
t_{k+1} - t_k, \;
\mathbbm{1}(\bs_{t_k} = \bs_{t_{k+1}})(\bs_{t_k} - \bs_{t_{k-1}}) + \mathbbm{1}(\bs_{t_k} \neq \bs_{t_{k+1}})(\bs_{t_{k+1}} - \bs_{t_k}), \\
\left. p\mathbbm{1}(\bs_{t_k} = \bs_{t_{k+1}}) + f\mathbbm{1}(\bs_{t_k} \neq \bs_{t_{k+1}}) \right)
\end{multline*}

Regarding the "only if" part, notice that $\bs_{t_k}$ and $\bs_{t_{k+1}}$ need to consecutive be observed anchor locations to determine whether the increment is a flight or a pause. If they are equal, $\bs_{t_k}$ and $\bs_{t_{k-1}}$ need to be consecutive observed anchor location, because then the realization of $\Delta^S_k$ is equal to the realization of $\Delta^S_{k-1}$.
\end{proof}

\begin{proof}[Proof of Proposition \ref{prop:obs-increments}]
Using Proposition $\ref{prop:obs-inc-obs-anchors}$ we need to show that these conditions are equivalent to observing two consecutive anchor locations. 

First let us consider the situation when $\mathcal{m}_k$ is a pause then the anchor location $\bs_{t_k}$ at its beginning is equal to $\bs_{t_{k+1}}$ the anchor location at the beginning of the following flight. Thus both of them are observed anchor locations if and only if we also observe $\bs_{t_{k-1}}$ and $\bs_{t_{k+2}}$. But since $\mathcal{m}_{k-1}$ and $\mathcal{m}_{k+1}$ are flights this is equivalent to observing the realization of their individual trajectories as well as the first element of the trajectory of $\mathcal{m}_{k+2}$. Thus $z_{t_k}z_{t_k-1}z_{t_{k+1}}z_{t_{k+1} + 1}=1$ means that $\bs_{t_k}$ and $\bs_{t_{k+1}}$ are observed anchor locations. In order for them to be consecutive we also need to observe the trajectory of $\mathcal{m}_k$. This proves the first part of the proposition.

Now let $\mathcal{m}_k$ be a flight. In this case $\bs_{t_k} \neq \bs_{t_{k+1}}$. Thus if $z_{t_k}z_{t_{k+1}}=1$ then we know that they are two consecutive anchor locations. Moreover $\bs_{t_k}$ is the realization of $\mathcal{D}(\bM_k)$ while $\bs_{t_{k+1}}$ is the realization of $\floor{\mathcal{D}(\bM_{k+1})}$. This ends the proof.
\end{proof}

\bibliographystyle{apalike}
\bibliography{bibliography}


\end{document}